\DeclareFixedFont{\ttb}{T1}{txtt}{bx}{n}{9} 
\DeclareFixedFont{\ttm}{T1}{txtt}{m}{n}{9}  
\definecolor{deepblue}{rgb}{0,0,0.5}
\definecolor{deepred}{rgb}{0.6,0,0}
\definecolor{deepgreen}{rgb}{0,0.5,0}
\ttb\color{deepblue},
\ttb\color{deepred},    
  \newcommand{\ismain}{0}
\tikzstyle{hedge}=[fill=white, draw=black, shape=rectangle, rounded corners=2mm, inner sep=0.2mm, outer sep=-2mm, scale=0.8, minimum height=8mm, minimum width=8mm, tikzit category=hypergraph]
\tikzstyle{hedge blue}=[hedge, fill={rgb,255: red,102; green,204; blue,255}, draw=black, shape=rectangle, tikzit category=hypergraph]
\tikzstyle{node}=[fill=black, draw=black, shape=circle, minimum size=1.5mm, inner sep=0mm, tikzit category=hypergraph]
\tikzstyle{red node}=[fill=red, draw=black, shape=circle, minimum size=1.5mm, inner sep=0mm, tikzit category=hypergraph]
\tikzstyle{node highlight}=[fill=black, draw=blue, thick, shape=circle, minimum size=1.5mm, inner sep=0mm, tikzit category=hypergraph]
\tikzstyle{red node highlight}=[fill=red, draw=blue, thick, shape=circle, minimum size=1.5mm, inner sep=0mm, tikzit category=hypergraph]
\tikzstyle{yellow hedge}=[hedge, fill=yellow, draw=black, shape=rectangle, tikzit category=hypergraph]
\tikzstyle{green hedge}=[hedge, fill=green, draw=black, shape=rectangle, tikzit category=hypergraph]
\tikzstyle{small box}=[fill=white, draw=black, shape=rectangle, minimum height=6mm, minimum width=6mm, tikzit category=string diagram]
\tikzstyle{vsmall box}=[fill=black, draw=black, shape=rectangle, minimum height=4mm, minimum width=1mm, tikzit category=string diagram, inner sep=0]
\tikzstyle{medium box}=[fill=white, draw=black, shape=rectangle, minimum height=11mm, minimum width=6mm, tikzit category=string diagram]
\tikzstyle{semilarge box}=[fill=white, draw=black, shape=rectangle, minimum height=16mm, minimum width=6mm, tikzit category=string diagram]
\tikzstyle{large box}=[fill=white, draw=black, shape=rectangle, minimum height=21mm, minimum width=6mm, tikzit category=string diagram]
\tikzstyle{black dot}=[fill=black, draw=black, shape=circle, minimum size=2mm, inner sep=0mm, tikzit category=string diagram]
\tikzstyle{white dot}=[fill=white, draw=black, shape=circle, minimum size=2mm, inner sep=0mm, tikzit category=string diagram]
\tikzstyle{red dot}=[fill=red, draw=black, shape=circle, minimum size=2mm, inner sep=0mm, tikzit category=string diagram]
\tikzstyle{wlabel}=[fill=none, draw=none, shape=rectangle, tikzit category=string diagram, font={\footnotesize}, inner sep=0pt, tikzit fill={rgb,255: red,102; green,204; blue,255}, tikzit draw={rgb,255: red,102; green,204; blue,255}, yshift=0.3mm]
\tikzstyle{BRchange}=[draw=black, shape=diamond, tikzit shape=circle, tikzit fill={rgb,255: red,96; green,0; blue,0}, diamond split part fill={black,red}, inner sep=-5mm, minimum width=2.7mm, minimum height=1.7mm]
\tikzstyle{RBchange}=[draw=black, shape=diamond, tikzit shape=circle, tikzit fill={rgb,255: red,165; green,0; blue,0}, diamond split part fill={red,black}, inner sep=0, minimum width=2.7mm, minimum height=1.7mm]
\tikzstyle{dummy}=[fill=none, draw=none, shape=circle, font={\small}, inner sep=1pt, tikzit draw=blue, tikzit fill=white]
\tikzstyle{node label}=[fill=none, draw=none, shape=rectangle, tikzit fill=cyan, tikzit draw=cyan, font={\scriptsize}, tikzit shape=circle, inner sep=0pt]
\tikzstyle{empty diag}=[fill=white, draw={rgb,255: red,165; green,165; blue,165}, shape=rectangle, minimum size=1.2 cm, dashed, thick]
\tikzstyle{dashed edge}=[-, dashed, very thick]
\tikzstyle{alt sort}=[-, dashed, dash pattern=on 2pt off 0.5pt, thick, draw=red]
\tikzstyle{diredge}=[->, >={Latex[length=1.5mm]}]
\tikzstyle{diredge highlight}=[->, >={Latex[length=1.5mm]}, draw=blue, thick]
\tikzstyle{diredge highlight alt}=[->, >={Latex[length=1.5mm]}, draw=red, thick]
\tikzstyle{boundary frame}=[-, draw={rgb,255: red,170; green,170; blue,255}, dashed, fill={rgb,255: red,238; green,238; blue,255}, thick, dash pattern=on 2pt off 0.5pt]
\tikzstyle{graph frame}=[-, draw={rgb,255: red,191; green,191; blue,191}, dashed, fill={rgb,255: red,238; green,238; blue,238}, thick, dash pattern=on 2pt off 0.5pt]
\tikzstyle{venn}=[-, draw={rgb,255: red,100; green,100; blue,100}, fill={rgb,255: red,238; green,238; blue,238}, thick, opacity=0.5]
\tikzstyle{def sort}=[-]
\tikzstyle{component}=[-, draw=red, thick]
\tikzstyle{map edge}=[{|->}, >=latex, shorten <=0.5mm, shorten >=0.5mm]
\tikzstyle{hypergraph map edge}=[{|->}, draw=red, shorten <=1mm, shorten >=1mm]
\tikzstyle{cdedge}=[->]
\tikzstyle{big cdedge}=[->, very thick, >=latex]
\tikzstyle{pointer edge}=[->, draw=gray, thick]
\newcommand{\ignora}[1]{ }
\setlist[itemize]{noitemsep, topsep=0pt}
\newcommand{\col}{C}
\def \catC {\mathbb{C}}
\def \catD {\mathbb{D}}
\def \catA {\mathbb{A}}
\def \catC {\mathbb{C}}
\def \PROP {\mathsf{PROP}} 
\def \CPROP {\mathsf{CPROP}} 
\newcommand{\sg}{\!\lower1pt\hbox{$\includegraphics[width=8pt]{graffles/greenbullet.pdf}$}\!} 
\newcommand{\sr}{\!\lower1pt\hbox{$\includegraphics[width=8pt]{graffles/redbullet.pdf}$}\!} 
\newcommand{\sbl}{\!\lower1pt\hbox{$\includegraphics[width=8pt]{graffles/blackbullet.pdf}$}\!} 
\newcommand{\CFrob}[1]{\ensuremath{\mathbf{Frob}_{#1}}\xspace}
\newcommand{\frob}{\ensuremath{\mathbf{Frob}}\xspace}
\newcommand{\perm}[1]{\mathbf{P}_{\scriptscriptstyle #1}}
\newcommand{\old}[1]{}
\newcommand{\tr}[1]{\xrightarrow{#1}}    
\newcommand{\tl}[1]{\xleftarrow{#1}}    
\newcommand{\dlcorner}{{\ar@{}[dl]|(.8){\text{\large $\urcorner$}}}}
\newcommand{\drcorner}{{\ar@{}[dr]|(.8){\text{\large $\ulcorner$}}}}
\newcommand{\synTosem}[1]{[\! [ #1 ]\! ]}
\newcommand{\frobTosem}[1]{[ #1 ]}
\newcommand{\allTosem}[1]{\langle\! \langle #1 \rangle \! \rangle}
\newcommand{\allTosembigg}[1]{\bigg\langle\!\!\!\bigg\langle #1 \bigg\rangle\!\!\!\bigg\rangle}
\newcommand{\Cospan}[1]{\mathsf{Csp}(#1)}
\def \F {\mf{F}}
\newcommand{\mf}{\mathbf}
\newcommand\symNet{\lower3pt\hbox{$\includegraphics[width=20pt]{graffles/symmetryalt.pdf}$}}
\newcommand\Idnet{\lower3pt\hbox{$\includegraphics[width=20pt]{graffles/id.pdf}$}}
\newcommand\lccB{\lower5pt\hbox{$\includegraphics[width=25pt]{graffles/rccr.pdf}$}}
\newcommand\rccB{\lower5pt\hbox{$\includegraphics[width=25pt]{graffles/lccl.pdf}$}}
\newcommand\lccn{\lower5pt\hbox{$\includegraphics[width=20pt]{graffles/cup.pdf}$}}
\newcommand\rccn{\lower5pt\hbox{$\includegraphics[width=20pt]{graffles/cap.pdf}$}}
\def \df {\ \ensuremath{:\!\!=}\ }
\DeclareMathOperator{\id}{id}
\newcommand{\Defeq}
 {\stackrel{{def}}{=}}
\newcommand{\stran}{\raise1pt\hbox{$\centerdot$}}
\newcommand{\rring}[1]{\ensuremath{\mathbb{#1}}}
\newcommand{\N}{\rring{N}}
\newcommand{\s}[2]{\stackrel{#1}{#2}}
\newcommand{\Ra}{\Rightarrow}
\renewcommand{\emptyset}{\varnothing}
\newcommand{\ladj}[2]{\ar@/^/[#1]^-{#2} \ar@{}[#1]|-%

{\ifthenelse{\equal{#1}{r}}{\bot}{%

{\ifthenelse{\equal{#1}{rr}}{\bot}{%

{\ifthenelse{\equal{#1}{l}}{\top}{%

{\ifthenelse{\equal{#1}{u}}{\dashv}{%

{\vdash}}}}}}}}}}
\newcommand{\radj}[2]{\ar@/^/[#1]^-{#2}}
\newcommand{\radjff}[2]{\ar@{_{(}->}[#1]^{#2}}
\newcommand{\pullbacktop}[4]{%

{#1} \ar@/_/[ddr]_{#4} \ar@/^/[drr]^{#2}%

\ar@{.>}[dr]|-{#3} \\}
\let\from\colon
\newcommand{\ltsred}[1]
{ \setbox0=\hbox{$\ {}^{#1}\ $}
  \setbox1=\hbox{$\longrightarrow$}
  \loop\setbox1=\hbox{$-$\kern-0.3em\unhbox1}\ifdim\wd1<\wd0\repeat
  \hbox{$\ \ \mathop{\box1}\limits^{#1}\ \ $}
}
\newcommand{\arx}[2]{\!\xymatrix@=15pt{\ar[r]^{{#1}}_{{#2}}&}\!}
\newlength{\mylength}
\newcommand{\DCospan}[2]{\mathsf{Csp}_{#1}(#2)}
\newcommand{\MDACospan}[2]{\mathsf{MACsp}_{#1}(#2)}
\newcommand{\syntax}[1]{\mathbf{S}_{#1}}
\newcommand{\FTerm}[1]{\DCospan{D}{\Hyp{{\scriptscriptstyle #1}}}}
\newcommand{\Hyp}[1]{\mathbf{Hyp}_{#1}}
\newcommand{\precBA}{\ensuremath{\preceq_{\BA{}}}\xspace}
\newcommand{\precFS}{\ensuremath{\preceq_{\FS{}}}\xspace}
\newcommand{\precU} {\ensuremath{\preceq_{U}}\xspace}
\newcommand{\precM} {\ensuremath{\preceq_{M}}\xspace}
\newcommand{\precL} {\ensuremath{\preceq_{\mathcal L}}\xspace}
\newcommand{\precmu}{\ensuremath{\preceq_{\mu}}\xspace}
\newcommand{\precdelta}{\ensuremath{\preceq_{\delta}}\xspace}
\def\bR{\begin{color}{red}}
\def\bB{\begin{color}{blue}}
\def\bM{\begin{color}{magenta}}
\def\bC{\begin{color}{cyan}}
\def\bW{\begin{color}{white}}
\def\bBl{\begin{color}{black}}
\def\bG{\begin{color}{green}}
\def\bY{\begin{color}{yellow}}
\def\e{\end{color}\xspace}
\def \poi {\,\ensuremath{;}\,}
\def \df {\ensuremath{:=}}
\def \tns {\ensuremath{\oplus}}
\def \: {\colon}
\newcommand{\fznote}[1]{\marginpar{{\bf F.Z.} #1 }}    
\newcommand\Wmult{\itikzfig{Wmult}\xspace}
\newcommand\Wcomult{\itikzfig{Wcomult}\xspace}
\newcommand\Wunit{\itikzfig{Wunit}\xspace}
\newcommand\Wcounit{\itikzfig{Wcounit}\xspace}
\newcommand\Bmult{\itikzfig{Bmult}\xspace}
\newcommand\Bcomult{\itikzfig{Bcomult}\xspace}
\newcommand\Bunit{\itikzfig{Bunit}\xspace}
\newcommand\Bcounit{\itikzfig{Bcounit}\xspace}
\newcommand{\SynToCsp}[1]{\ensuremath{\lfloor\!\lfloor{#1}\rfloor\!\rfloor}}    
\newcommand{\rrule}[2]{\ensuremath{\left\langle #1,#2 \right\rangle}}
\newcommand{\out}[1]{\mathsf{out}(#1)}
\newcommand{\inp}[1]{\mathsf{in}(#1)}
\newcommand{\node}{\lower0pt\hbox{$\includegraphics[width=6pt]{graffles/node.pdf}$}}
\newcommand{\hyperedge}{\lower2pt\hbox{$\includegraphics[width=25pt]{graffles/hyperedge.pdf}$}}
\newcommand{\ZeronetT}{\lower4pt\hbox{$\includegraphics[width=14pt]{graffles/idzerocircuit.pdf}$}}
\newcommand\idncircuit{\lower4pt\hbox{$\includegraphics[width=18pt]{graffles/idncircuit.pdf}$}}
\def \dfop {\ \ensuremath{=:}\ }
\newtheorem{example}[therm]{Example} 
\begin{document}

\title{String Diagram Rewrite Theory II:\\ Rewriting with Symmetric Monoidal Structure}

\begin{authgrp}
\author{Filippo Bonchi}
\affiliation{University of Pisa
\email{filippo.bonchi@unipi.it}
}

\author{Fabio Gadducci}
\affiliation{University of Pisa
\email{fabio.gadducci@unipi.it}
}

\author{Aleks Kissinger}
\affiliation{University of Oxford
\email{aleks.kissinger@cs.ox.ac.uk}
}

\author{Pawel Sobocinski}
\affiliation{Tallinn University of Technology
\email{sobocinski@gmail.com}
}

\author{ Fabio Zanasi}
\affiliation{University College London
\email{f.zanasi@cs.ucl.ac.uk}
}
\end{authgrp}

\jnlPage{1}{00}
\jnlDoiYr{2020}
\doival{10.1017/xxxxx}

\begin{abstract}
Symmetric monoidal theories (SMTs) generalise algebraic theories in a way that make them suitable to express resource-sensitive systems, in which variables
 cannot be copied or discarded at will.
%
In SMTs, traditional tree-like terms are replaced by \emph{string diagrams}, topological entities that can be intuitively thought of as
diagrams of wires and boxes. Recently, string diagrams have become increasingly popular as a graphical syntax to reason about computational models across diverse fields, including programming language semantics, circuit theory, quantum mechanics, linguistics, and control theory. In applications, it is often convenient to implement the equations appearing in SMTs as \emph{rewriting rules}. This poses the challenge of extending the traditional theory of term rewriting,
which has been developed for algebraic theories, to string diagrams.
In this paper, we develop a mathematical theory of string diagram rewriting for SMTs. Our approach exploits the correspondence between string diagram rewriting and double pushout (DPO) rewriting of certain graphs, introduced in the first paper of this series. Such a correspondence is only sound when the SMT includes a \emph{Frobenius algebra} structure. In the present work, we show how an analogous correspondence may be established for arbitrary SMTs, once an appropriate notion of DPO rewriting (which we call \emph{convex}) is identified.
As proof of concept, we use our approach to show termination of two SMTs of interest: Frobenius semi-algebras and bialgebras. \end{abstract}

\begin{keywords}
String diagram, Symmetric Monoidal Category, Double-Pushout  Rewriting
\end{keywords}

\maketitle

\section{Introduction}

The study of algebraic theories and their role in modelling computing systems~\cite{hyland2007category,DBLP:journals/entcs/BehrischKP12} is a
recurring theme of John Power's research, and the subject of some of his most influential contributions.
In a series of articles~\cite{power1999enriched,DBLP:conf/calco/Power05,DBLP:journals/jfp/LackP09,DBLP:journals/lmcs/GarnerP18}, he and his coauthors developed an enriched category theoretic generalisation of Lawvere theories and explored their applications, particularly in the study of computational effects of programming languages. Whereas monads provide a powerful theory for principled and compositional definitions of \emph{denotational} semantics, as pioneered by Moggi~\cite{moggi1991notions}, algebraic theories are particularly useful~\cite{DBLP:conf/fossacs/Power04,DBLP:conf/mpc/Power06,DBLP:journals/entcs/Power06a} in the development of formal and principled approaches to \emph{operational} semantics, as shown in a series of articles as part of a long-running and productive collaboration with Gordon Plotkin~\cite{plotkin2001semantics,DBLP:conf/fossacs/PlotkinP01,DBLP:conf/fossacs/PlotkinP02,DBLP:conf/ifipTCS/HylandPP02,DBLP:journals/acs/PlotkinP03,DBLP:journals/entcs/PlotkinP04}.

There have been several efforts to generalise the notion of algebraic theory in
 general, and that of Lawvere theory in particular.
Especially after the work of Lack~\cite{Lack2004a}, the theory of
 PROPs~\cite{MacLane1965} ---a particularly simple family of symmetric strict monoidal
 categories--- has been advanced as a categorical tool for the study of algebraic theories,
 and PROPs have
 been applied in several parts of computer
science~\cite{CoeckeDuncanZX2011,BaezErbele-CategoriesInControl,SadrzadehCC14,ZanasiThesis,GhicaJL17,CoeckeKissinger_book,BonchiSZ17,BonchiHPS17,BPSZ-lics19,JacobsKZ19}.
%
The notion of algebraic theory here is that of symmetric monoidal theory, with the essential difference being that the underlying assumption of Cartesianity
is discarded.
Indeed, PROPs generalise Lawvere theories, since the latter are nothing but Cartesian PROPs. The correspondence is well-behaved enough to extend to
\emph{presentations} of theories: indeed, it has been since long understood~\cite{Fox76} that any presentation of a Lawvere theory can be seen as a presentation
of a symmetric monoidal theory~\cite{Bonchi2018}.
PROPs are more general and can be used to capture partial and relational theories~\cite{GadducciCorradini-functorial,Liberti2021,Bonchi2017c,Zanasi16}. Overall, it appears that symmetric monoidal structure is the axiomatic baseline for many pertinent examples.

One of the driving motivations for the development of rewriting theory has been the desire to implement aspects of algebraic theories. For example, the word problem for a (presentation of an) algebraic theory is decidable if one can orient the equations $l=r$, obtaining rewriting rules $l\rightarrow r$, and prove confluence and termination of the resulting rewriting system. In this way, one obtains normal forms. To decide whether two terms are judged equal by such an algebraic theory, it suffices to rewrite both until no more redexes are found, then compare the results: They are equal precisely when they rewrite to the same normal form. However, classical rewriting theory has been developed for ordinary terms, which are intimately connected with classical algebraic theories.

\medskip

The big question driving the theoretical contributions of this paper is ``how does one implement algebraic theories captured by PROPs?''.
If we take rewriting as an answer, then the rewriting has to be done up-to the axioms of symmetric strict monoidal categories.

\medskip
Traditional terms enjoy a pleasantly simple structure: their syntactic decomposition may be represented as trees. Analogously, the structure of terms of symmetric monoidal categories
may be represented as a particular family of string diagrams. However, there is an underlying problem: trees are combinatorial objects that
are elements of a classical, inductive data type, with well-understood and efficient algorithms that are exploited for rewriting. On the other hand, string
diagrams have traditionally been considered as topological entities~\cite{Joyal1991}. Our first task is therefore to understand string diagrams as combinatorial
objects. In the prequel~\cite{BGKSZ-partone} to this paper, we showed a close connection between string diagrams over a signature $\Sigma$ and
the category of discrete cospans of hypergraphs with $\Sigma$-typed edges. Nevertheless, the correspondence is not an isomorphism: for isomorphic
cospans to be equated as string diagrams, they must be considered up-to an underlying special Frobenius structure. While examples of such theories abound,
here we consider mere symmetric monoidal structure. In Theorem~\ref{thm:charactImage}, we characterise those cospans that arise via this correspondence,
and in Proposition~\ref{th:characterisation-col} we extend this characterisation to the multi-coloured case.
The cospans of interest are those whose underlying hypergraph is acyclic and satisfies an additional technical condition that we refer to as monogamy.
Checking both is algorithmically simple enough.

Having identified a satisfactory combinatorial representation leads us to the actual mechanism of rewriting, which is an adaptation of the double pushout (DPO)
approach~\cite{HandbookDPO}. The first modification of classical DPO is forced on us by the fact that we are rewriting cospans of hypergraphs,
hence the rewriting has to be done in a way that respects the interfaces. Similar approaches have been considered previously in the literature on graph rewriting,
though, the most notable example being~\cite{Ehrig2004}. Secondly, and more seriously, the general mechanism of DPO rewriting is not sound for mere symmetric
monoidal structure. The reason for this has already been highlighted in the previous paragraph: the correspondence between string diagrams and cospans of
hypergraphs works when string diagrams are considered modulo the axioms of symmetric monoidal categories as well as those of the Frobenius structure.
Given that we do not want to assume the presence of Frobenius, we must suitably restrict the DPO mechanism.
We introduce two technical modifications: first, legal pushout complements are restricted to a variant we call boundary complements in order to preserve
monogamy and  acyclicity of the resulting cospan, and second, matches have to be restricted to convex matches, which have a topologically intuitive
explanation. We call the resulting variant convex DPO rewriting and show that it is a sound and complete mechanism for rewriting modulo symmetric
monoidal structure in Theorem~\ref{th:adequacyRigidSMT}.

To illustrate the framework, we study two examples of symmetric monoidal theories: Frobenius semi-algebras and bialgebras. For both theories, we demonstrate straightforward proofs of termination making explicit use of the graph-theoretic structure. We furthermore show that the theory of Frobenius semi-algebras is not confluent using a surprising property of convex rewriting: namely that non-overlapping rewriting rule applications can interfere with each other. Our counter-example to confluence is adapted from an example due to Power~\cite{power1991npasting}, which was originally given in the context of pasting diagrams for 3-categories. Incidentally, Power's example also occurred in a festschrift, in honour of Max Kelly's 60th birthday in 1991.


\medskip
This paper is based on previous conference articles~\cite{BonchiGKSZ_lics16,BGKSZ-esop17,BonchiGKSZ18}, and it is the second in the series,
following~\cite{BGKSZ-partone}, that collects these results in a coherent and comprehensive narrative. The formulation of our characterisation for coloured props,
as well as the case studies of Frobenius semi-algebras (Section~\ref{semiFrob}) and bialgebras (Section~\ref{case:bialg}), are novel with respect to the conference papers.

\paragraph*{\emph{Structure of the paper.}} Although the material in this paper is a prosecution of~\cite{BGKSZ-partone},
we have tried to make the presentation self-contained. We give the background material in \S\ref{sec:background}.
In \S\ref{sec:combinatorial} we give the characterisation of string diagrams for PROPS as discrete cospans of hypergraphs that are acyclic and monogamous.
In \S\ref{sec:rewriting} we develop convex DPO rewriting, the mechanism that correctly implements rewriting modulo symmetric monoidal structure.
Finally, in \S\ref{sec:casestudy} we consider two case studies: Frobenius semi-algebras and bialgebras.

\section{Background}
\label{sec:background}

\subsection{Symmetric Monoidal Theories and PROPs}

In this section we recall some basic notions and fix notation. We confine ourselves to the definitions that are strictly necessary for our developments, and refer the reader  to the first part of this exposition~\cite{BGKSZ-partone} for a gentler introduction to the same notions.

\medskip

A \emph{symmetric monoidal theory} (SMT) is a pair $(\Sigma, \mathcal E)$. Here, $\Sigma$
is a \emph{monoidal signature}, consisting of operations $o \: n \to m$ of a fixed \emph{arity} $n$ and \textit{coarity} $m$, for $n,m \in \N$. The second element $\mathcal E$ is a set of \emph{equations}, namely pairs of \emph{$\Sigma$-terms} $l,r \colon v \to w$ with the same arity and
coarity. Recall that $\Sigma$-terms are constructed by combining the operations in $\Sigma$, \emph{identities} $\id_n \colon n\to n$ and \emph{symmetries} $\sigma_{m,n} \colon m+n\to n+m$ for each $m,n\in \mathbb N$, by sequential ($;$) and parallel ($\tns$) composition.

SMTs have a categorical rendition as PROPs~\cite{MacLane1965} (monoidal \textbf{pro}duct and \textbf{p}ermutation categories).

\begin{definition}[PROP]
A PROP is a symmetric strict monoidal category with objects the natural numbers, where the product on objects, denoted $\tns$, is addition. Morphisms are identity-on-objects symmetric strict monoidal functors. PROPs and their morphisms form the category $\PROP$.
\end{definition}

\begin{figure}[t]
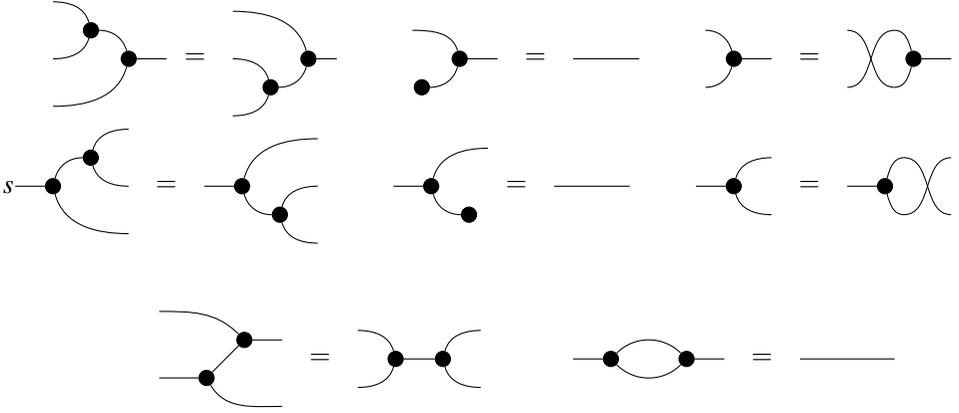

\[
\begin{array}{rcl}
(s \poi t) \poi u = s \poi (t \poi u) & \qquad \qquad &
\id_n \poi s = s = s\poi \id_m\\
(s \tns t) \tns u = s \tns (t \tns u) &  &
\id_0 \tns\ s = s  = s \tns \id_0 \\
(s \poi u) \tns (t \poi v) = (s \tns t) \poi (u \tns v) &  &
\id_m \tns \id_n = \id_{m+n} \\
(\sigma_{m,n} \tns \id_p) \poi (\id_n \tns\ \sigma_{m,p}) = \sigma_{m,n+p} &  &
\sigma_{m,n}\poi\sigma_{n,m}=\id_{m+n} \\
\end{array}
\]\vspace{-.4cm}
\[
\begin{array}{c}
(s\tns \id_m) \poi \sigma_{m,n} = \sigma_{m,p} \poi (\id_m \tns\ s) 
\end{array}
\]\vspace{1cm}
\caption{Laws of symmetric monoidal categories instantiated to a PROP ($\catC$, $\tns$, $0$), with $m,n,p \in \mathbb N$ objects of $\catC$ and
 $s,t,u,v$  morphisms of $\catC$ of the appropriate (co)arity. The laws express associativity of $\poi$ and $\tns$, and how they interact with each other and with the identities. Also, they express that symmetries are natural and involutive.}\label{fig:axSMC}\end{figure}

An SMT $(\Sigma, \mathcal E)$ gives raise to a PROP $\syntax{\Sigma, \mathcal E}$ by letting the arrows $u\to v$ of $\syntax{\Sigma, \mathcal E}$ be the $\Sigma$-terms $u\to v$ modulo the laws of symmetric monoidal categories (Figure~\ref{fig:axSMC}) and the smallest congruence containing the equations $t=t'$ for any $(t,t') \in \mathcal E$. We are going to represent these arrows by using the graphical language of \emph{string diagrams}~\cite{Selinger2009}. When $\mathcal E$ is empty, we shall use notation $\syntax{\Sigma}$ for the PROP presented by $(\Sigma, \mathcal E)$.

The SMT of \emph{special commutative Frobenius algebras} (which we shall usually refer to simply as Frobenius algebras, for brevity) plays a special role in our developments.

\begin{example}[Frobenius Algebras]
\label{ex:frob}
Consider the SMT $(\Sigma_{\textbf{\frob}}, {\mathcal E}_{\textbf{frob}})$, where
\[ \Sigma_{\frob} := \left\{\, \Bmult \colon 2 \to 1,\, \Bunit \colon 0 \to 1 ,\, \Bcomult \colon 1\to 2,\, \Bcounit \colon 1\to 0 \, \right\} \]
and ${\mathcal E}_{\textbf{frob}}$ is the set consisting of the following three equations

\begin{equation*}
\begin{split}
\tikzfig{monoid-laws} \\s
\tikzfig{comonoid-laws}\\[2em]
\tikzfig{frobenius-law} \qquad
\tikzfig{special-law}
\end{split}
\end{equation*}

We use $\frob$ to abbreviate the PROP $\syntax{(\Sigma_{\frob}, {\mathcal E}_{\textbf{frob}})}$ presented by $(\Sigma_{\frob}, {\mathcal E}_{\frob})$.

\end{example}

As for regular algebraic theories, one may consider multi-coloured versions of SMTs and PROPs. What changes is the notion of signature, which is now a pair $(\col, \Sigma)$ consisting of a set  $\col$ of \emph{colours} 
and a set $\Sigma$ of operations $o \colon w \to v$, with $w, v \in \col^{\star}$ words over $\col$.

\begin{definition}[Coloured Prop]
Given a finite set $\col$ of colours, a \emph{$\col$-coloured PROP} $\catA$ is a symmetric strict monoidal category where the set of objects $\col^{\star}$ is the set of the finite words over $\col$ and the monoidal product on objects is word concatenation. A morphism from a $\col$-coloured PROP $\catA$ to a $\col'$-coloured PROP $\catA'$ is a symmetric strict monoidal functor $H \: \catA\rightarrow \catA'$ acting on objects as a monoid homomorphism generated by a function $\col \to \col'$. Coloured PROPs and their morphisms form the category $\CPROP$.
\end{definition}

As expected, $\PROP$ is the full sub-category of $\CPROP$ given by restricting to $\{c\}$-coloured PROPs, for a fixed colour $c$.

\begin{example}\label{ex:colouredfrob}
	For later use, we recall the multi-colored analogous of Example~\ref{ex:frob}, which is the theory of Frobenius algebras over a set $\col$ of colours. Its monoidal signature  includes operations
	\[  \Bmult \colon cc \to c\qquad \Bunit \colon \epsilon \to c \qquad \Bcomult \colon c\to cc \qquad \Bcounit \colon c\to \epsilon  \]
	 (where $\epsilon \in \col^{\star}$ denotes the empty word) and equations as in Example~\ref{ex:frob} for each colour $c \in \col$. We write $\frob_{\col}$ for the $\col$-coloured PROP presented by this SMT.
\end{example}

\begin{remark}\label{rmk:pushoutCprops} 
As observed in \cite{BGKSZ-partone}, coproducts in $\CPROP$ work a bit differently than in $\PROP$. Intuitively, a coproduct $\catC + \catC'$ in $\PROP$ will identify the common core of the two PROPs, i.e. the set of objects $\N$ and the symmetrical monoidal structure on these objects. Instead, a coproduct in $\CPROP$ will not make such identification, as the involved PROPs, say $\catD$ and $\catD'$, may be based on different sets of colours. However, if $\catD$ and $\catD'$ happen to be coloured from the same set $\col$, we may still identify their common structure. Formally, this takes the form of a pushout, which we write $\catD +_{\scriptscriptstyle \col} \catD'$. Such pushout is obtained from the span of the inclusion morphisms $\catD \tl{} \perm{\col} \tr{} \catD'$, where $\perm{\col}$ is the $\col$-coloured PROP presented by the theory with an empty signature $(\col, \emptyset)$ and no equations. One may think of $\perm{\col}$ as having arrows $w \to v$ the permutations of $w$ into $v$ (thus arrows exist only when the word $v$ is an anagram of the word $w$).
\end{remark}

\subsection{Syntactic rewriting for PROPs}


\begin{definition} \label{defn:rewprop}
  A \textit{rewriting rule} in a PROP $\catA$ consists of a pair of arrows $l, r \: i \to j$ in $\catA$ with the same arities and coarities, which we write as $\rrule{l}{r}$. Given $a,b \: m \to n$ in $\catA$, $a$ rewrites into $b$ via $\RS$, written $a \Rew{\rrule{l}{r}} b$, if they are decomposable as follows
\begin{equation}
\label{eq:rewpropmatch}
\tikzfig{lhs-ctx} \qquad\qquad
\tikzfig{rhs-ctx}
\end{equation}
In this situation, we say that $a$ contains a \emph{redex} for $\rrule{l}{r}$. A \emph{rewriting system} $\RS$ is a set of rewriting rules, where we write $a \Rew{\RS} b$ to mean there exists $\rrule{l}{r} \in \RS$ such that $a \Rew{\rrule{l}{r}} b$.
\end{definition}

The equations $\mathcal{E}$ associated with an SMT $(\Sigma,\mathcal{E})$
can be oriented as rewriting rules. They give rise to a rewriting system, in the PROP
$\syntax{\Sigma}$ presented by $(\Sigma,\varnothing)$. Note that the
decompositions~\eqref{eq:rewpropmatch} are equalities modulo the laws of SMCs
(Figure~\ref{fig:axSMC}). Thus rewriting in a PROP always happens modulo these laws.

\begin{notation}
  Note that we write generic pairs and tuples using parentheses and reserve the notation $\rrule{l}{r}$ specifically for the case when the pair $(l,r)$ forms a rewriting rule.
\end{notation}

\subsection{Hypergraphs with Interfaces}

String diagrams in PROPs are interpreted as hypergraphs with interfaces, which we recall below.

\begin{definition}[Hypergraphs]
\label{defn:hyp}
A \emph{hypergraph} $G$ consists of a set $G_\star$ of \emph{nodes} and, for each $k,l \in \N$, a (possibly empty) set of \emph{hyperedges} $G_{k,l}$ with $k$ (ordered) sources and $l$ (ordered) targets of elements in $G_\star$, while a hypergraph morphisms $f: G \rightarrow H$  is a family of functions $\{f_\star, f_n \mid n \in \N\}$
satisfying the expected constraints.

We denote by $\Hyp{}$ the category of (finite) hypergraphs and hypergraph homomorphisms.
\end{definition}

Alternatively, and the characterisation will become useful later on, $\Hyp{}$ is the functor category
${\mathbb F}^{\mathbf{I}}$, where ${\mathbf{I}}$ has as objects pairs of natural numbers $(k,l) \in \N \times \N$ together with one extra object $\star$, and, for each $k,l\in\N$, there are $k+l$ arrows from $(k,l)$ to $\star$.

Nodes will be drawn as dots and a $(k,l)$ hyperedge $h$ will be drawn as a rounded box, whose connections on the left represent the list $[s_1(h), \ldots, s_k(h)]$, ordered from top to bottom, and whose connections on the right represent the list $[t_1(h), \ldots, t_l(h)]$.

\medskip

We now introduce hypergraphs with hyperedges typed in a monoidal signature $\Sigma$.
First, define $G_{\Sigma}$ as the hypergraph with just one node and for each $k,l \in \N$ the set
of $\Sigma$-operations of arity $k$ and coarity $l$ as set of hyperedges with $k$ sources and $l$ targets.
The category $\Hyp{\Sigma}$ of \emph{$\Sigma$-labeled hypergraphs} is the category whose objects
consists of an hypergraph $G$ together with a graph homomorphism
$\lambda: G \to G_{\Sigma}$, which intuitively labels  each hyperedge with an operation in $\Sigma$, while
labelled graph homomorphisms are defined accordingly.
We call such objects $\Sigma$-hypergraphs
and we visualise them as hypergraphs whose  hyperedges $h$ are labelled by $\lambda(h)$.
Observe that this definition ensures that a $\Sigma$-operation $o \: n \to m$
labels a hyperedge only when it has $n$ (resp. $m$) ordered input (resp. output) nodes.

\begin{example}
We show our notational conventions for labelled hypergraphs with the aid of an example. The hypergraph $G$ has nodes $\{ n_1, \ldots, n_8 \}$, a $(3,3)$-hyperedge $h_1$, a $(2,1)$-hyperedge $h_2$, and a $(1,0)$-hyperedge $h_3$, and the following source and target maps
\begin{equation*}
\begin{array}{l}
s_1(h_1) := v_1 \\
s_2(h_1) := v_2 \\
s_3(h_1) := v_3 \\
\end{array}
\quad
\begin{array}{l}
t_1(h_1) := v_5 \\
t_2(h_1) := v_6 \\
t_3(h_1) := v_6
\end{array}
\qquad , \qquad
\begin{array}{l}
s_1(h_2) := v_3 \\
s_2(h_2) := v_4 \\
t_1(h_2) := v_8 \\
\end{array}
\qquad , \qquad
\begin{array}{l}
s_1(h_3) := v_6 \\
\end{array}
\end{equation*}
Also, suppose $\Sigma = \{ o_1 \colon 3 \to 3, o_2 \colon 1 \to 0, o_3 \colon 2 \to 1\}$ is a monoidal signature and $o_1$, $o_2$, $o_3$ label the hyperedges of $G$ of the matching type. Then $G$ is drawn as follows
\[ \tikzfig{hypergraph-ex-labels} \]
\end{example}

\medskip
Arrows of a PROP will receive an interpretation as labeled hyergraphs \emph{with interfaces}. The notion of interface is modelled using certain \emph{cospans} in $\Hyp{\Sigma,}$.

\begin{definition}[Hypergraphs with Interfaces] \label{def:csphyp}
A \emph{cospan} from $G$ to $G'$ in $\Hyp{\Sigma}$ is a pair of arrows $G \tr{f} G'' \tl{g} G'$ in $\Hyp{\Sigma}$, where $G''$ is called the \emph{carrier} of the cospan and $G$, $G'$ are the \emph{interfaces} of $G''$. Two cospans $G \tr{f1} G_1 \tl{g1} G'$ and $G \tr{f2} G_2 \tl{g2} G'$ are \emph{isomorphic} when there is an isomorphism $\alpha \from G_1 \to G_2$ in $\Hyp{\Sigma}$ making the following diagram commute
\begin{equation*}\label{eq:preorder}
\xymatrix@R=2pt@C=10pt{
& G_1 \ar[dd]^{\alpha} & \\
G \ar@/_/[dr]_{f2} \ar@/^/[ur]^{f1} & & G' \ar@/_/[ul]_{g1} \ar@/^/[dl]^{g2} \\
& G_2 & \\
}
\end{equation*}
We define $\Cospan{\Hyp{\Sigma}}$ as the category with the same objects as $\Hyp{\Sigma}$ and arrows $G \to G'$ the cospans from $G$ to $G'$, up-to cospan isomorphism. 
Composition of $G\to  H \xleftarrow{f} G'$
and $G'\xrightarrow{g} H' \leftarrow G''$ is $G\to H+_{f,g}H' \leftarrow G''$, obtained by taking the pushout of $f$ and $g$.\footnote{Pushouts are unique only up-to iso,
 which explains why arrows of $\Cospan{\Hyp{\Sigma}}$ are defined up-to the same equivalence.}

 	We define $\DCospan{D}{\Hyp{\Sigma}}$ as the full subcategory of the category of cospans in $\Cospan{\Hyp{\Sigma}}$ with objects the discrete hypergraphs (i.e. hypergraphs with empty set of hyperedges).
\end{definition}

Notation in Definition~\ref{def:csphyp} follows the one introduced in~\cite{BGKSZ-partone}, where $\DCospan{D}{\Hyp{\Sigma}}$ is presented as an instance of a more general construction $\DCospan{H}{\catC}$, for a given functor $H$ and a category $\catC$. Without going in full details, in the case of  $\DCospan{D}{\Hyp{\Sigma}}$, the subscript $D$ is a functor with the role of selecting those cospans whose source and target (the interfaces) are \emph{discrete} hypergraphs. This means that the objects of $\DCospan{D}{\Hyp{\Sigma}}$ are natural numbers, and it is in fact a PROP.

\medskip

As with PROPs, we shall also consider the multi-coloured case of hypergraphs with interfaces. 
Given a set $\col$ of colours, a (multi-coloured) signature $(\col, \Sigma)$ can be encoded as an hypergraph $G_{\col,\Sigma}$: the set of nodes is $\col$ and each operation $o \colon u \to v$ yields an hyperedge, with ordered input nodes forming the word $u \in \col^{\star}$ and ordered output nodes forming the word $v \in \col^{\star}$. We then define the category $\Hyp{\col,\Sigma}$ of \emph{$(\col, \Sigma)$-labeled hypergraphs} as the slice category $\Hyp{} \downarrow G_{\col,\Sigma}$. Objects of $\Hyp{\col,\Sigma}$ can be visualised as hypergraphs with nodes labeled in $\col$ and hyperedges labeled in $\Sigma$, in a way that is compatible with the arity and coarity of operations in $\Sigma$.

Analogously to the one-coloured case, we can form the category $\Cospan{\Hyp{\col,\Sigma}}$ of cospans in $\Hyp{\col,\Sigma}$. We will work in  $\DCospan{D_\col}{\Hyp{\col,\Sigma}}$, the full subcategory of $\Cospan{\Hyp{\col,\Sigma}}$ with objects the discrete hypergraphs. Note that $\DCospan{D_\col}{\Hyp{\col,\Sigma}}$ is a $\col$-coloured PROP.

\subsection{Double-Pushout Rewriting of Hypergraphs}

\emph{Double-pushout (DPO) rewriting}~\cite{HandbookDPO} is an algebraic approach to rewriting that,
originally given for the category of graphs, can be defined in categories whose pushouts obey certain
well-behavedness conditions, called \emph{adhesive} categories~\cite{Lack2005}.
Now, note that $\Hyp{\Sigma}$ of Definition~\ref{defn:hyp} can be abstractly characterised as $\Hyp{} \downarrow G_{\Sigma}$, i.e., the coslice of a presheaf category: this guarantees that it is adhesive~\cite{BGKSZ-partone}, and thus we may apply DPO rewriting therein. In fact, in order to properly interpret string diagram rewriting, we will need a variation of DPO rewriting that takes into account interfaces. This variation, which we call DPO rewriting with interfaces (DPOI), has appeared in different guises in the literature, see e.g.~\cite{Ehrig2004,Gadducci07,BonchiGK09,Gadducci1998,Sassone2005a}. DPOI provides a notion of rewriting for arrows $G \tl{} J$ in $\Hyp{\Sigma}$, which we write this way to emphasise that $J$ acts as the interface of the hypergraph $G$, allowing $G$ to be ``glued'' to a context. We now recall the definition of DPOI rewriting step.

\begin{definition}[DPOI Rewriting] \label{def:dpoi}
	Given $G \leftarrow J$ and $H \leftarrow J$ in $\Hyp{\Sigma}$, \emph{$G$ rewrites into $H$ with interface $J$} --- notation $(G \tl{} J) \DPOstep{\mathcal{R}} (H \tl{} J)$ --- if there exist rule $L \tl{} K  \tr{} R$ in $\mathcal{R}$ and object $C$ and cospan of arrows $K \rightarrow C \leftarrow J$ in $\Hyp{\Sigma}$
such that the diagram below commutes and its marked squares are pushouts

\begin{equation}\label{eq:dpo2a}
\raise25pt\hbox{$
\xymatrix@R=10pt@C=20pt{
L \ar[d]_m   &  K \ar[d]
\ar@{}[dl]|(.8){\text{\large $\urcorner$}}
\ar@{}[dr]|(.8){\text{\large $\ulcorner$}}
\ar[l] \ar[r]  & R \ar[d] \\
 G &  C \ar[l] \ar[r]  & H \\
&  J \ar[u] \ar[ur]  \ar[ul]
}$}
\end{equation}
\end{definition}

Typically, DPOI rewriting takes two distinct steps: first one computes from $K \tr{} L \tr{m} G$ the object $C$ and the arrows $K \tr{} C \tr{} G$ (called a \textit{pushout complement}), then one pushes out the span $C \tl{} K \tr{} R$
 to produce the rewritten object $H$, preserving the interface $J$.

 Pushout complements always exist in $\Hyp{\Sigma}$, but they are not necessarily unique.
 They are so if the rule is \emph{left-linear}, that is, if $K \tr{} L$ is monic.
 We will come back to this point in Section~\ref{sec:rewriting}, as it plays an important role in giving a sound interpretation of string diagram rewriting as DPOI rewriting. For more details on the properties of pushout complements in $\Hyp{\Sigma}$, we refer to Part I of this work~~\cite[Section~4]{BGKSZ-partone}.

%
%
%
%

%

%
%
%
%
%
%

\section{Combinatorial Characterisation of String Diagrams}
\label{sec:combinatorial}
Let us fix a monoidal signature $\Sigma$. In \cite{BGKSZ-partone} we gave an interpretation of the arrows of the PROP
$\syntax{\Sigma}$ in terms of cospans in $\Hyp{\Sigma}$. We also saw that to make this interpretation an isomorphism, one needs to augment $\syntax{\Sigma}$ with the structure of a Frobenius algebra. Formally, there are PROP morphisms
\[
\xymatrix{
\syntax{\Sigma} \ar[r]^(0.3){\synTosem{\cdot}} & \DCospan{D}{\Hyp{\Sigma}} & \frob \ar[l]_(0.3){\frobTosem{\cdot}}
}
\]
such that their copairing $\allTosem{\cdot} \: \syntax{\Sigma} + \frob \to \DCospan{D}{\Hyp{\Sigma}}$ is an isomorphism of PROPs \cite[Theorem 3.9]{BGKSZ-partone}.
For the purpose of this paper it is convenient to recall the definition of both
morphisms: it suffices to define how they act on the generators, as for arbitrary
arrows their action is given by induction on the structure of PROPs.
The morphisms $\synTosem{\cdot} \colon \syntax{\Sigma} \to \DCospan{D}{\Hyp{\Sigma}}$ maps a generator $o \in \Sigma$ into the following cospan
\begin{equation}\label{eq:cospangenerator}
\tikzfig{hypergraph-gen}
\end{equation}
where the inputs (outputs) of the edge labelled $o$ are in bijective correspondence with the nodes of the
discrete graph on the left (on the right, respectively).

For the generators of $ \frob$, the morphism $\frobTosem{\cdot}\colon \frob \to \DCospan{D}{\Hyp{\Sigma}}$ is defined as follows
\[
\begin{array}{ccccccc}
\Bmult & \quad\mapsto\quad &\quad \tikzfig{csp-mult}
&\qquad\qquad\qquad&
\Bcomult & \quad\mapsto\quad & \quad\tikzfig{csp-comult} \\[1cm]
\Bunit & \quad\mapsto\quad & \tikzfig{csp-unit}
&\qquad&
\Bcounit & \quad\mapsto\quad & \tikzfig{csp-counit}
\end{array}
\]

%
Note that $\allTosem{\cdot} \: \syntax{\Sigma} + \frob \to \DCospan{D}{\Hyp{\Sigma}}$ is defined on the generators of $\syntax{\Sigma} + \frob$, but it respects the laws of symmetric monoidal categories (Figure~\ref{fig:axSMC}) and of Frobenius algebras (Example~\ref{ex:frob}). Indeed, a major payoff of the combinatorial interpretation is that equivalent string diagrams are all interpreted as the same hypergraph with interfaces. We refer to ~\cite{BGKSZ-partone} for more discussion on this aspect.

\medskip

In this paper we plan to exploit $\FTerm{\Sigma}$ as a combinatorial domain where to interpret rewriting in $\syntax{\Sigma}$. In the remainder of this section we thus focus on  $\synTosem{\cdot} \colon \syntax{\Sigma} \to \DCospan{D}{\Hyp{\Sigma}}$ and provide a combinatorial characterization of its image.  A preliminary series of definitions introduces the relevant hypergraph notions: \emph{monogamy} and \emph{acyclicity}.

\begin{definition}[Degree of a node]
The \textit{in-degree} of a node $v$ in hypergraph $G$ is the number of pairs $(h,i)$ where $h$ is an hyperedge with $v$ as its $i$-th target. Similarly, the \textit{out-degree} of $v$ is the number of pairs $(h,j)$ where $h$ is an hyperedge with $v$ as its $j$-th source.
\end{definition}

\begin{definition}[Monogamy] \label{def:monogamous} Given $m \tr{f} G \tl{g} n$ in $\FTerm{\Sigma}$, let $\inp{G}$ be the image of $f$ and $\out{G}$ the image of $g$. We say that the cospan is \emph{monogamous} if $f$ and $g$ are mono and for all nodes $v$ of $G$
\begin{align*}
\textrm{the in-degree of $v$ is } & \begin{cases} 0 &\mbox{if } v \in \inp{G} \\
1 & \mbox{otherwise.} \end{cases} \\
\textrm{the out-degree of $v$ is } & \begin{cases} 0 &\mbox{if } v \in \out{G} \\
1 & \mbox{otherwise} \end{cases}
\end{align*}
\end{definition}

We refer to the nodes in $\inp{G}$ and $\out{G}$ as the inputs and the outputs of $G$ and abusing notation we may
say that $G$ is monogamous.
The cospan in \eqref{eq:cospangenerator} is clearly monogamous: all the nodes on the left are inputs and they have in-degree $0$ and out-degree $1$, while all the nodes on the right are outputs and they have in-degree $1$ and out degree $0$.

\begin{example}
Four examples of cospans that are \emph{not} monogamous are displayed below. In here and the reminder of the paper, we use numeric labels when we wish to specify how the cospan legs are defined on the nodes.  
\ctikzfig{non-monog-cospans}
\end{example}

\begin{lemma}
Identities and symmetries in $ \DCospan{D}{\Hyp{\Sigma}}$ are monogamous.
\end{lemma}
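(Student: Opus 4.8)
The plan is to unwind Definition~\ref{def:monogamous} directly against the explicit cospans that represent identities and symmetries in $\DCospan{FI}{\Hyp{\Sigma}}$. First I would recall their shape: since objects of $\DCospan{FI}{\Hyp{\Sigma}}$ are the discrete hypergraphs $n$, the identity on $n$ is the cospan $n \tr{\id} n \tl{\id} n$, and the symmetry $\sigma_{m,n}$ is the cospan $(m+n) \tr{\id} (m+n) \tl{\gamma} (n+m)$, where $\gamma$ is the evident block-swap bijection. The crucial features, in both cases, are that the carrier has \emph{no hyperedges} and that both legs are underlain by bijections on the node set.

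Then I would verify the two clauses of monogamy. Bijections are in particular monic, which settles the first clause ($f$ and $g$ mono). For the degree conditions, surjectivity of the left leg forces $\inp{G}$ to be the whole node set $G_\star$, and surjectivity of the right leg forces $\out{G} = G_\star$ likewise; hence every node is simultaneously an input and an output, so Definition~\ref{def:monogamous} only requires each node to have in-degree $0$ and out-degree $0$. Since the carrier is discrete there are no hyperedges at all, so these degrees are trivially $0$, and the verification is complete.

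I do not expect any genuine obstacle here: the claim is a routine consequence of the definitions, and the only mild care needed is in making explicit what the representing cospans look like — in particular that their carriers are discrete — so that the degree conditions become vacuous. The same reasoning in fact shows, slightly more generally, that any cospan $m \tr{f} G \tl{g} n$ with $G$ discrete and $f$, $g$ bijective is monogamous.
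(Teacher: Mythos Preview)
Your proposal is correct and follows essentially the same approach as the paper: both arguments rest on the observation that the carriers of identities and symmetries are discrete (so all degrees are zero) and that every node is simultaneously an input and an output. Your version is simply more explicit about the shape of the cospans and the role of bijectivity; the paper's proof compresses this to two sentences.
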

\begin{proof}
The cospans identifying identities and symmetries involve discrete graphs, so the
in-degree and the out-degree
of all nodes are $0$. Moreover, all these nodes are both inputs and outputs.
\end{proof}

\begin{lemma}\label{lemma:composition}
Let $m \tr{} G \tl{} n$ and $n \tr{} H \tl{} o$ be arrows in $ \DCospan{D}{\Hyp{\Sigma}}$. If both are monogamous cospans, then  $(m \tr{} G \tl{} n) \, ;  \, (n \tr{} H \tl{} o)$ is monogamous.
\end{lemma}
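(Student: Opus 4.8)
The plan is to compute the composite cospan explicitly via pushout in $\Hyp{\Sigma}$ and then verify the monogamy conditions node by node. Write the two cospans as $m \tr{f} G \tl{g} n$ and $n \tr{h} H \tl{k} o$, both monogamous, so $f,g,h,k$ are mono. Their composite is $m \to G +_{g,h} H \tl{} o$, obtained from the pushout of $g$ and $h$ along the shared discrete interface $n$. Since $n$ is discrete (no hyperedges) and $g,h$ are mono, the pushout object $P := G +_{g,h} H$ is, concretely, the disjoint union $G \sqcup H$ with the node $g(i)$ identified with the node $h(i)$ for each $i \in n$; crucially, no hyperedges are identified or created, and the hyperedges of $P$ are exactly (the disjoint union of) those of $G$ and of $H$, with their source/target maps inherited.

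First I would check that the legs $m \to P$ and $o \to P$ are still mono: they factor as $m \tr{f} G \hookrightarrow P$ and $o \tr{k} H \hookrightarrow P$ respectively, and the coprojections $G \to P$, $H \to P$ are mono because pushouts of monos along monos are monos in an adhesive category (and $\Hyp{\Sigma}$ is adhesive). Next, I would compute $\inp{P}$ and $\out{P}$: a node of $P$ is in $\inp{P}$ iff it is in the image of $m \to P$, i.e. iff it comes from a node in $\inp{G}$; similarly a node of $P$ is in $\out{P}$ iff it comes from a node in $\out{H}$. The heart of the argument is then the case analysis on a node $v$ of $P$, computing its in-degree and out-degree. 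Each such $v$ arises either (a) solely from a node of $G$ not identified with anything, (b) solely from a node of $H$ not identified with anything, or (c) from the identification of some $g(i) \in \out{G}$ with $h(i) \in \inp{H}$. In cases (a) and (b), the degrees of $v$ in $P$ equal its degrees in $G$ (resp. $H$), since no incident hyperedges are added, and monogamy of $G$ (resp. $H$) plus the observation about $\inp{P}$/$\out{P}$ gives the required values. In case (c), the in-degree of $v$ in $P$ is the in-degree of $g(i)$ in $G$ (which is $0$ if $g(i)\in\inp{G}$, else $1$) plus the in-degree of $h(i)$ in $H$, which is $0$ since $h(i)\in\inp{H}$; and dually the out-degree of $v$ in $P$ is the out-degree of $h(i)$ in $H$ plus the out-degree of $g(i)$ in $G$, the latter being $0$ since $g(i)\in\out{G}$. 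So $v$'s in-degree in $P$ matches that of $g(i)$ in $G$ and its out-degree matches that of $h(i)$ in $H$; combined with the fact that such a $v$ lies in $\inp{P}$ iff $g(i)\in\inp{G}$ and in $\out{P}$ iff $h(i)\in\out{H}$, the monogamy conditions follow.

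The main obstacle is getting the bookkeeping in case (c) exactly right: one must be careful that identifying an output node of $G$ with an input node of $H$ does not accidentally produce a node of in-degree or out-degree $2$, and this is precisely where monogamy of \emph{both} cospans is used (the input node of $H$ contributes no incoming edge, the output node of $G$ contributes no outgoing edge). A secondary point requiring care is justifying that the pushout in $\Hyp{\Sigma}$ really is the naive node-identification described above — that it adds no hyperedges and identifies no hyperedges — which follows because the pushout is computed pointwise in the presheaf/coslice presentation of $\Hyp{\Sigma}$ and $n$ is discrete, so the hyperedge components of the pushout span are over the empty set. Since the statement is isomorphism-invariant and composition of cospans is defined only up to isomorphism, working with this concrete representative of the pushout loses no generality.
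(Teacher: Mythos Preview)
Your proposal is correct and follows essentially the same approach as the paper's proof: both argue that the pushout legs are mono because pushouts along monos preserve monos in $\Hyp{\Sigma}$, and that the degree condition is preserved because the composite glues each node in $\out{G}$ to exactly one node in $\inp{H}$. The paper dispatches the degree check in a single sentence (``trivially preserved''), whereas you spell out the node-by-node case analysis; your more careful version is a faithful expansion of what the paper leaves implicit.
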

\begin{proof}
Since pushouts along monos in $\Hyp{\Sigma}$ are mono, the morphisms of the cospans
 resulting from the composition $(m \tr{} G \tl{} n) \, ;  \, (n \tr{} H \tl{} o)$ are also mono.
The condition on degrees is trivially preserved since
$(m \tr{} G \tl{} n) \, ;  \, (n \tr{} H \tl{} o)$  is obtained by gluing together $G$ with $H$
along the nodes in $n$. This means that each of the nodes in $\out{G}$ is identified
with exactly one of the node in $\inp{H}$.
\end{proof}

\begin{lemma}
Let $m_1 \tr{} G_1 \tl{} n_1$ and $m_2 \tr{} G_2 \tl{} n_2$ be arrows in $\DCospan{D}{\Hyp{\Sigma}}$. If both are monogamous cospans, then  $(m_1 \tr{} G_1 \tl{} n_1) \, \tns  \, (m_2 \tr{} G_2 \tl{} n_2)$ is monogamous.
\end{lemma}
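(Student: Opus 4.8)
The plan is to mirror the proof of Lemma~\ref{lemma:composition}, exploiting the fact that the monoidal product of cospans in $\DCospan{FI}{\Hyp{\Sigma}}$ is computed as a coproduct in $\Hyp{\Sigma}$. Concretely, $(m_1 \tr{f_1} G_1 \tl{g_1} n_1) \tns (m_2 \tr{f_2} G_2 \tl{g_2} n_2)$ is the cospan $m_1 + m_2 \tr{f_1 + f_2} G_1 + G_2 \tl{g_1 + g_2} n_1 + n_2$, where $+$ denotes disjoint union of (discrete) hypergraphs. So the first step is simply to unfold this description of $\tns$.

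Next I would check that the legs are mono. Since $f_1,f_2$ are mono and the coproduct injections $G_i \hookrightarrow G_1 + G_2$ are mono with disjoint images, the copaired map $f_1 + f_2$ is mono; likewise for $g_1 + g_2$. (Alternatively, one can note that coproducts in an adhesive category are "well-behaved" enough for this, but the direct argument is immediate here since the carriers are literally disjoint unions of node-sets.) This disposes of the first clause of Definition~\ref{def:monogamous}.

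For the degree condition, the key observation is that a hyperedge of $G_1 + G_2$ lies entirely within one of the two summands, so for a node $v$ coming from $G_i$ its in-degree and out-degree in $G_1 + G_2$ are exactly its in-degree and out-degree in $G_i$. Moreover $\inp{G_1 + G_2} = \inp{G_1} \uplus \inp{G_2}$ and $\out{G_1 + G_2} = \out{G_1} \uplus \out{G_2}$, since images of coproduct maps are computed componentwise. Hence, whether $v$ is an input/output of $G_1 + G_2$ is decided by whether it was an input/output of the summand it came from, and the required case analysis on degrees reduces verbatim to the one already known to hold for $G_1$ and for $G_2$.

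I do not expect any real obstacle: the only point requiring a line of care is confirming that $\tns$ on cospans is the coproduct on carriers and interfaces (this is standard for the monoidal structure on a cospan PROP, and matches the convention of~\cite{BGKSZ-partone}); everything else is a direct componentwise verification. In fact the statement is strictly easier than Lemma~\ref{lemma:composition}, since no pushout/gluing occurs and no node identifications take place.
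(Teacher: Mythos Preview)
Your proposal is correct and follows essentially the same approach as the paper: both arguments observe that $\tns$ is computed by coproduct, so degrees are unchanged and input/output status is determined componentwise. Your version is simply more explicit (in particular you spell out the monicity of the legs, which the paper leaves implicit).
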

\begin{proof}
By definition $(m_1 \tr{} G_1 \tl{} n_1) \, \tns  \, (m_2 \tr{} G_2 \tl{} n_2)$ is obtained by coproduct and therefore the degree of each node is the same as in the original graphs $G_1$ and $G_2$. Moreover each node is an input iff it is an input in $G_1$ or in $G_2$ and it is an output iff it is an output in $G_1$ or $G_2$.
\end{proof}

\medskip

The notions of \emph{acyclicity} and \emph{(directed) path}
between  two nodes in a (directed) graph generalises to (directed)
hypergraphs in the obvious way.

\begin{definition}
    For a pair of hyperedges $h,h'$, we call $h$ a \textit{predecessor} of $h'$ and $h'$ a \textit{successor} of $h$ if there exists a node $v$ in the target of $h$ and in the source of $h'$.
\end{definition}

\begin{definition}[Path]\label{def:path}
    For a hypergraph $G$ and hyperedges $h,h'$, a \textit{path} $p$ from $h$ to $h'$ is a sequence of hyperedges $[h_1, \ldots, h_n]$ such that $h_1 = h$, $h_n = h'$, and for $i < n$, $h_{i+1}$ is a successor of $h_i$. We say $p$ starts at a subgraph $H$ if $H$ contains a node in the source of $h$, and terminates at a subgraph $H'$ if $H'$ contains a node in the target of $h'$. 
\end{definition}
By regarding nodes as single-node subgraphs, it clearly makes sense to talk about  paths from/to nodes as well.

\begin{definition}[Acyclicity] \label{def:directedacyclic}
    A hypergraph $G$ is \textit{acyclic} if there exists no path from a node to itself. We also call a cospan $m \tr{} G \tl{} n$ acyclic if the property holds for $G$.
\end{definition}

Like for monogamy, it is easy to see that identities and symmetries are acyclic and that the monoidal product of acyclic cospans is acyclic. Unfortunately, the composition of two acyclic cospans might be cyclic: for instance by composing the following two acyclic cospans
\ctikzfig{acyclic-cospan-ex}
one obtains the cyclic cospan
\ctikzfig{acyclic-cospan-ex2}

This issue can be avoided by additionally requiring the cospans to be monogamous.
\begin{proposition}\label{prop:subPROP}
Let $m \tr{} G \tl{} n$, $n \tr{} H \tl{} o$,  $m_1 \tr{} G_1 \tl{} n_1$ and $m_2 \tr{} G_2 \tl{} n_2$ be monogamous acyclic cospans.
\begin{enumerate}
\item Identities and symmetries in $\DCospan{D}{\Hyp{\Sigma}}$ are monogamous  acyclic.
\item $(m \tr{} G \tl{} n) \, ;  \, (n \tr{} H \tl{} o)$ is monogamous acyclic.
\item  $(m_1 \tr{} G_1 \tl{} n_1) \, \tns  \, (m_2 \tr{} G_2 \tl{} n_2)$ is monogamous  acyclic.
\end{enumerate}
\end{proposition}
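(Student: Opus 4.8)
The plan is to reduce each of the three clauses to results already in hand — the three monogamy lemmas together with the observations (made just before the proposition) that identities, symmetries, and tensors of acyclic cospans are acyclic — so that the only genuine work is showing that the composite of two monogamous acyclic cospans is again acyclic.

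Clause (1) is immediate: identities and symmetries are monogamous by the lemma above, and they have no hyperedges at all, hence no paths, hence are acyclic. Clause (3) is also essentially bookkeeping: $(m_1 \to G_1 \leftarrow n_1)\,\tns\,(m_2 \to G_2 \leftarrow n_2)$ is monogamous by the corresponding lemma, and its carrier is the disjoint union $G_1 \sqcup G_2$; since $G_1$ and $G_2$ share no nodes, every path lies entirely inside one of the two summands, so acyclicity of $G_1$ and $G_2$ forces acyclicity of the tensor.

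The heart of the argument is clause (2). Monogamy of $(m \to G \leftarrow n)\,;\,(n \to H \leftarrow o)$ is Lemma~\ref{lemma:composition}, so I only need acyclicity. Write $G +_n H$ for the carrier, the pushout of $G \leftarrow n \to H$ along the legs with images $\out{G}$ and $\inp{H}$; since both legs are mono and $n$ is discrete, this glues $\out{G}$ to $\inp{H}$ bijectively, the hyperedges of $G +_n H$ are the disjoint union of those of $G$ and those of $H$, and a node lying in both the $G$-part and the $H$-part must lie in $\out{G} = \inp{H}$. The key observation is a ``no return'' property of paths: if $h$ is a hyperedge of $H$ and $h'$ is a successor of $h$, then $h'$ is also a hyperedge of $H$. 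Indeed, a witnessing node $v$ lies in the target of $h\in H$, so $v$ is a node of $H$; were $h'$ a hyperedge of $G$, then $v$ would also be a node of $G$, forcing $v \in \out{G} = \inp{H}$, but $v \in \out{G}$ has out-degree $0$ in $G$, contradicting that $v$ is a source of $h'\in G$. Hence the hyperedge sequence of any path in $G +_n H$ is a (possibly empty) block of $G$-hyperedges followed by a (possibly empty) block of $H$-hyperedges. Now suppose $G +_n H$ had a path $[h_1,\dots,h_k]$ from a node $v$ to itself. If all $h_i$ lie in $H$, all connecting nodes are nodes of $H$ and so is $v$, giving a cycle in $H$, a contradiction; symmetrically if all $h_i$ lie in $G$. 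Otherwise $h_1\in G$ and $h_k\in H$, so $v$ is a source of $h_1\in G$ — hence a node of $G$ with out-degree $\ge 1$ in $G$ — and a target of $h_k\in H$ — hence a node of $H$; being a node of both, $v\in\out{G}=\inp{H}$, so $v$ has out-degree $0$ in $G$, a contradiction. All cases being excluded, $G +_n H$ is acyclic.

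The main obstacle is isolating and justifying the ``no return'' property, which rests on the fact that a node shared between the two halves of the pushout must be a former interface node of $n$, and on monogamy giving such a node out-degree $0$ on the $G$-side and in-degree $0$ on the $H$-side; once this is in place the case analysis is routine. Some care is also needed to argue at the level of the pushout in $\Hyp{\Sigma}$ — that nodes of the composite really are the set-level pushout of the node sets, because the interface is discrete — so that ``node of $G$'', ``node of $H$'', and ``glued node'' are unambiguous.
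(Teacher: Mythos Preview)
Your proof is correct and follows the same approach as the paper: clauses (1) and (3) are dispatched by appeal to the preceding lemmas and observations, and for clause (2) you use monogamy to argue that the glued interface nodes have out-degree $0$ in $G$ and in-degree $0$ in $H$, which blocks any cycle from crossing back. Your version is considerably more detailed than the paper's --- you formalise the ``no return'' property and carry out an explicit case analysis, whereas the paper simply observes that no path in $G$ starts at an interface node and no path in $H$ ends at one --- but the underlying idea is identical.
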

\begin{proof}
The first and the third points follow from what we discussed so far.
The second point is the most interesting one. From Lemma~\ref{lemma:composition},
it follows immediately that $(m \tr{} G \tl{} n) \, ;  \, (n \tr{} H \tl{} o)$ is monogamous,
so we only need to show that this is acyclic. Since both $m \tr{} G \tl{} n$ and
$n \tr{} H \tl{} o$ are monogamous, their composition just identifies each of the nodes
in $\out{G}$ with exactly one node in $\inp{H}$.
The identification of these nodes cannot create any cycle since there is no path
in $G$ starting with one of these nodes (since their out-degree in $G$ is 0) and there
is no path in $H$ arriving in one of these nodes (since their in-degree in $H$ is 0).
\end{proof}

The above proposition informs us that monogamous  acyclic cospans form
a sub-PROP of $\DCospan{D}{\Hyp{\Sigma}}$, which we call hereafter
$\MDACospan{D}{\Hyp{\Sigma}}$. The main result of this section
(Theorem~\ref{thm:charactImage}) shows that $\MDACospan{D}{\Hyp{\Sigma}}$
is exactly the image of $\syntax{\Sigma}$ through $\synTosem{\cdot}$.
Its proof relies on an additional definition and a decomposition lemma.

\begin{definition}[Convex sub-hypergraph] \label{def:convexsubhyp}
A sub-hypergraph $H \subseteq G$ is \textit{convex} if, for any nodes $v, v'$ in $H$ and any  path $p$ from $v$ to $v'$ in $G$,
every hyperedge in $p$ is also in $H$.
\end{definition}

\begin{example}
Consider the following hypergraph
\ctikzfig{convex-ex}
Below on the left and on the right are illustrated a convex and a non convex sub-hypergraph
\ctikzfig{convex-ex2}
\end{example}

\begin{lemma}[Decomposition]\label{lemma:convexfact}
    Let $m \rightarrow G \leftarrow n$ be a monogamous  acyclic cospan and $L$ a convex sub-hypergraph of  $G$. Then there exists $k\in\N$ and a
    unique cospan $i \rightarrow L \leftarrow j$ such that $G$ factors as
    \begin{equation}\label{eq:snd-factor}
        \tikzfig{ctx-decomposition}
    \end{equation}
    where all cospans in~\eqref{eq:snd-factor} are monogamous  acyclic.
\end{lemma}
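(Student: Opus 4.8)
The plan is to produce the factorisation \eqref{eq:snd-factor} explicitly from the combinatorial data of $G$, $L$ and the convexity hypothesis, and then argue uniqueness of $i \tr{} L \tl{} j$ via monogamy. First I would define the interface $i \tr{} L \tl{} j$ of the convex piece $L$: a node $v$ of $L$ should be an \emph{input} of $L$ precisely when either $v$ is an input of $G$ lying in $L$, or $v$ has a hyperedge of $G \setminus L$ in its in-star; dually for outputs. Monogamy of $G$ guarantees each node has in-degree at most one and out-degree at most one, so these ``input/output of $L$'' conditions are mutually exclusive in the expected way and $L$ with these legs is again a monogamous cospan; acyclicity of $L$ is inherited from $G$ since $L$ is a sub-hypergraph. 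I would fix linear orders on these input and output sets to get honest objects $i, j \in \N$.

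Next I would build the two context cospans $C_1$ and $C_2$ together with the integer $k$. Intuitively $k$ counts the ``bypass wires'' — nodes/paths of $G$ that connect the left boundary $m$ to the right boundary $n$ without touching $L$, or more precisely the part of $G$ that must be routed around $L$. Concretely: remove from $G$ all hyperedges of $L$ (but keep its boundary nodes), and split the remaining hypergraph into the part $C_1$ that feeds into the inputs of $L$ (the ``downward closure'' towards $L$ from $m$) and the part $C_2$ fed by the outputs of $L$, with $k$ recording the wires of $G$ disjoint from $L$ that pass from $C_1$ directly to $C_2$. The convexity hypothesis is exactly what makes this split well-defined: because every path in $G$ between two nodes of $L$ stays inside $L$, there is no hyperedge of $G \setminus L$ that lies simultaneously ``after an output of $L$'' and ``before an input of $L$'', so the sets of hyperedges assigned to $C_1$ and to $C_2$ are disjoint and together with $L$ exhaust $G$'s hyperedges. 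Acyclicity of $G$ then gives a consistent assignment (no hyperedge is both upstream and downstream of $L$). I would then check that pushing out along the shared interfaces $i+k$ and $j+k$ — i.e. composing the three cospans of \eqref{eq:snd-factor} — reconstructs $G$ on the nose, which is a direct verification that the pushout glues $C_1$, the identity on $k$, $L$, and $C_2$ back along exactly the boundary nodes we separated. Monogamy and acyclicity of $C_1$, $C_2$ follow from Proposition~\ref{prop:subPROP} together with the degree bookkeeping, since they are full sub-hypergraphs of the monogamous acyclic $G$ equipped with the induced interfaces.

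For uniqueness: suppose $i' \tr{} L \tl{} j'$ is another interface making $G$ factor as in \eqref{eq:snd-factor} with monogamous acyclic pieces. In any monogamous cospan the inputs are precisely the nodes of in-degree $0$ and the outputs precisely the nodes of out-degree $0$ \emph{within that carrier}; when $L$ sits inside $G$ via such a factorisation, a node of $L$ gets in-degree $0$ in $L$ iff its unique in-edge in $G$ (if any) was contributed by $C_1$ rather than by $L$, which is a property of $G$ and $L$ alone, not of the chosen interface. Hence the \emph{set} of inputs of $L$, and dually of outputs, is forced, and since cospans in $\FTerm{\Sigma}$ are taken up to isomorphism the ordering is immaterial; this pins down $i \tr{} L \tl{} j$ up to the equivalence we quotient by. The integer $k$ is likewise forced, e.g. as the difference between $m$ (or $n$) and the boundary data already accounted for by $L$ and the contexts, so it is unique as well.

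The main obstacle I expect is the careful definition of the two contexts $C_1$ and $C_2$ and the verification that their composition with the identity-on-$k$ strand and with $L$ really yields $G$ and not merely something isomorphic with extra identified nodes; this is where convexity must be used in full force, and where one has to be meticulous that the partition of the hyperedges and nodes of $G$ among $C_1$, $L$, $C_2$ and the $k$ bypass wires is both exhaustive and disjoint. Once that partition is nailed down, monogamy and acyclicity of all the pieces, together with uniqueness, are comparatively routine consequences of Proposition~\ref{prop:subPROP} and the degree constraints.
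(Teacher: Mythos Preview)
Your proposal is correct and follows essentially the same route as the paper: partition the hyperedges of $G$ into those upstream of $L$ (forming $C_1$, together with the inputs of $G$), those of $L$ itself, and the remainder (forming $C_2$), extract $i,j,k$ from the resulting node overlaps, and verify monogamy of each piece by degree bookkeeping, with convexity doing the work for the $L$-cospan. Two minor notes: your appeal to Proposition~\ref{prop:subPROP} for the monogamy of $C_1,C_2$ is misplaced---that proposition concerns closure under $\poi$ and $\tns$, not passage to sub-hypergraphs; the actual argument (as in the paper) is that $C_1$ is closed under predecessors and $C_2$ under successors, so the induced interfaces pick out exactly the degree-$0$ nodes. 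You also explicitly argue uniqueness of $i \to L \leftarrow j$, which the paper's own proof in fact glosses over; your argument (that the input and output sets of $L$ are forced by which nodes have in- or out-degree $0$ inside $L$, a property determined by $G$ and $L$ alone) is the right one.
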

\begin{proof}
Let $C_1$ be the smallest sub-hypergraph containing the inputs of $G$ and every
hyperedge $h$ that is not in $L$, but has a path to it. Let $C_2$ then be the smallest
sub-hypergraph containing the outputs of $G$ such that $C_1 \cup L \cup C_2 = G$.
By construction, $C_1$ and $L$ share no hyperedges. Should $C_2$ share a hyperedge
with $C_1$ or $L$, then a smaller $C_2'$ would exist such that
$C_1 \cup L \cup C_2' = G$, so $C_2$ shares no hyperedges with either
$C_1$ or $L$. Hence, the three sub-hypergraphs only overlap on nodes. Now let
    \begin{align*}
        i & := {C_1}_\ast \cap L_\ast \\
        j & := {C_2}_\ast \cap L_\ast \\
        k & := ({C_1}_\ast \cap {C_2}_\ast) \backslash L_\ast
    \end{align*}
    where $L_\ast$ are the nodes of hypergraph $L$, and the same for
    $C_1$ and $C_2$. Pictorially, these sub-hypergraphs are defined as follows
    \ctikzfig{ctx-graphs}

    Now, define the following cospans, where arrows are all inclusions
    \[ m \rightarrow C_1 \leftarrow k + i \]
    \[ i \rightarrow L \leftarrow j \]
    \[ k + j \rightarrow C_2 \leftarrow n \]
   Then \eqref{eq:snd-factor} is computed as the colimit of the following diagram
    \[ m \rightarrow C_1 \leftarrow k + i
         \rightarrow k + L \leftarrow k + j
         \rightarrow C_2 \leftarrow n \]
    The two spans identify precisely those nodes from $G$ that occur
    in more than one sub-hypergraph, so this amounts to simply taking the union
    \[ m \rightarrow C_1 \cup L \cup C_2 \leftarrow n \ =\ m \rightarrow G \leftarrow n \]
Now $C_1, C_2$, and $L$ are  acyclic because $G$ is, so it only remains
to show that each of these cospans is monogamous. For $C_1$ and $C_2$
it follows straightforwardly from the observation that, by construction, $C_1$ is
closed under predecessors and $C_2$ under successors.
%
%
The interesting case is $L$, which relies on convexity. Suppose $v$ has no
in-hyperedge in $L$. Then either $v$ is an input or there exists a hyperedge
with a path to $v$. One of these two is true precisely when $v \in i$.
Suppose $v$ has no out-hyperedge in $L$. Then, either $v$ is an ouput or
it has an out-hyperedge in $C_1$ or $C_2$. But if it has an out-hyperedge
$h$ in $C_1$, then there is a path from $v$ to another node $v'$, going
through $h$. By convexity, $h$ must then be in $L$, which is a contradiction.
Hence $v \in C_2$, which is true if and only if $v \in j$.
\end{proof}

\begin{therm}\label{thm:charactImage} A cospan $n \tr{} G \tl{} m$ is in the image of $\synTosem{\cdot}\colon \syntax{\Sigma} \to  \DCospan{D}{\Hyp{\Sigma}}$ if and only if $n \tr{} G \tl{} m$ is monogamous  acyclic. \end{therm}
\begin{proof}
The only if direction follows by induction on the arrows of $\syntax{\Sigma}$: for the base case, it is immediate to check that \eqref{eq:cospangenerator} is monogamous  acyclic, while the inductive cases follow from Proposition~\ref{prop:subPROP}.

For the converse direction, we can reason by induction on the number of hyperedges in $G$. If $G$ does not contain any, then monogamy and acyclicity imply that $n\tr{}G$ and $m\tr{}G$ are bijections, so that $n \tr{} G \tl{} m$ is in the image of an arrow only consisting of identities and symmetries.
%
For the inductive step, pick any hyperedge $e$ of $G$. Recall that $e$ has a label $l(e)\in \Sigma$ and that $l(e)$ is an arrow of $\syntax{\Sigma}$.
By monogamy and acyclicity,
$\synTosem{l(e)}$ is a convex sub-hypergraph of $G$. Hence, by Lemma~\ref{lemma:convexfact}, $n \tr{} G \tl{} m$ factors as~\eqref{eq:snd-factor}, with $L$ being $\synTosem{l( e)}$.
    The lemma guarantees that all the above cospans are monogamous  acyclic. Therefore, by the inductive hypothesis they are in the image of $\synTosem{\cdot}$, and so the same holds for $n \tr{} G \tl{} m$.
\end{proof}

The following result (Corollary 3.11 in \cite{BGKSZ-partone}) proves that
$\synTosem{\cdot}\colon \syntax{\Sigma} \to  \DCospan{D}{\Hyp{\Sigma}}$
is faithful, so an immediate consequence of the above theorem is that
$ \syntax{\Sigma}$ is isomorphic to the sub-PROP of $\DCospan{D}{\Hyp{\Sigma}}$
of monogamous  acyclic cospans.
\begin{corollary}
$ \syntax{\Sigma} \cong \MDACospan{D}{\Hyp{\Sigma}}$.
\end{corollary}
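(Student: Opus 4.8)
The plan is to assemble the isomorphism directly from the two facts just established: Theorem~\ref{thm:charactImage}, which identifies the image of $\synTosem{\cdot}$ with $\MDACospan{FI}{\Hyp{\Sigma}}$, and the faithfulness of $\synTosem{\cdot}$ recalled from \cite[Corollary~3.11]{BGKSZ-partone}. Essentially all the real content has already been discharged; what remains is a short piece of bookkeeping.

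First I would observe that $\synTosem{\cdot}\colon \syntax{\Sigma} \to \DCospan{FI}{\Hyp{\Sigma}}$ is by construction a PROP morphism, hence an identity-on-objects symmetric strict monoidal functor, and that by Proposition~\ref{prop:subPROP} the monogamous acyclic cospans form a sub-PROP $\MDACospan{FI}{\Hyp{\Sigma}}$ of $\DCospan{FI}{\Hyp{\Sigma}}$. By Theorem~\ref{thm:charactImage}, every arrow in the image of $\synTosem{\cdot}$ lies in $\MDACospan{FI}{\Hyp{\Sigma}}$, and conversely every monogamous acyclic cospan is of the form $\synTosem{f}$ for some $\Sigma$-term $f$. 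Hence $\synTosem{\cdot}$ corestricts to a functor $F\colon \syntax{\Sigma} \to \MDACospan{FI}{\Hyp{\Sigma}}$, which, being the corestriction of a PROP morphism along a sub-PROP inclusion, is again identity-on-objects, strict monoidal, and symmetric.

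Next I would check that $F$ is full and faithful. Fullness is precisely the ``every monogamous acyclic cospan is in the image'' direction of Theorem~\ref{thm:charactImage}. Faithfulness is immediate, since $\synTosem{\cdot}$ is faithful by \cite[Corollary~3.11]{BGKSZ-partone} and corestriction does not alter the action on hom-sets. A functor between PROPs that is bijective on objects (here, literally the identity on objects) and full and faithful is an isomorphism of categories; its inverse is then also bijective on objects and bijective on hom-sets, so it automatically preserves composition, tensor, and symmetries, hence is a PROP morphism. Therefore $F$ is an isomorphism of PROPs, giving $\syntax{\Sigma} \cong \MDACospan{FI}{\Hyp{\Sigma}}$.

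There is no genuine obstacle here: the only point one might pause over is that the inverse of $F$ is strict monoidal and symmetric, but this is automatic for a bijective-on-objects, bijective-on-arrows functor whose forward direction is strict monoidal. The substantive work — the combinatorial characterisation of the image and the faithfulness of $\synTosem{\cdot}$ — is exactly what Theorem~\ref{thm:charactImage} and the cited corollary provide, so the present statement follows formally.
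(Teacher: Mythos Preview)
Your proposal is correct and takes essentially the same approach as the paper: the paper simply remarks that $\synTosem{\cdot}$ is faithful by \cite[Corollary~3.11]{BGKSZ-partone}, so Theorem~\ref{thm:charactImage} immediately gives the isomorphism onto the sub-PROP of monogamous acyclic cospans. You have merely spelled out the routine bookkeeping (corestriction, fullness, and why the inverse is a PROP morphism) that the paper leaves implicit.
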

%
%
%
%
\subsection{Characterisation  for coloured PROPs}
At the beginning of this section, we recalled from  \cite{BGKSZ-partone} that  $\DCospan{D}{\Hyp{\Sigma}}$ is isomorphic to $\syntax{\Sigma} + \frob $. 
The isomorphism extends in the obvious way to the coloured case: Proposition 3.12 in \cite{BGKSZ-partone} states that $\DCospan{D_\col}{\Hyp{\col,\Sigma}}$ 
is isomorphic to $\syntax{\col,\Sigma} +_\col \CFrob{\col}$ (where $+_\col$ and $\CFrob{\col}$ are defined as in Example~\ref{ex:colouredfrob} and 
Remark~\ref{rmk:pushoutCprops}). The same holds for Theorem \ref{thm:charactImage}. The definition of $\synTosem{\cdot}_\col$ 
given in  \cite{BGKSZ-partone} is the same as the one in \eqref{eq:cospangenerator}, but with the proper interpretation of colours as labels of nodes. 
The definition of monogamous  acyclic cospans (Definitions \ref{def:monogamous} and \ref{def:directedacyclic}) does not change: the notions of degree 
and path are exactly the same in coloured and non coloured hypergraphs.
All the results proved above hold straightforwardly by following the same proofs. In particular we have the following.

\begin{therm}
\label{th:characterisation-col}
 A cospan $w \tr{} G \tl{} v$ is in the image of $\synTosem{\cdot}_\col  \: \syntax{\col, \Sigma} \to \DCospan{D_\col}{\Hyp{\col, \Sigma}}$ if and only if $w \tr{} G \tl{} v$ is monogamous  acyclic.
 \end{therm}

\section{A Sound and Complete Interpretation for String Diagram Rewriting}
\label{sec:rewriting}

In this section we develop a version of DPOI rewriting that is sound and complete for symmetric monoidal categories that \textit{do not} come with a chosen Frobenius algebra on each object. Recall that, as shown in~\cite{BGKSZ-partone}, DPOI rewriting for hypergraphs (Definition~\ref{def:dpoi}) corresponds exactly to the rewriting for a symmetric monoidal theory $\Sigma$, modulo Frobenius structure. 

Before formally stating this correspondence (Theorem~\ref{thm:frobeniusrewriting} below), we recall from~\cite{BGKSZ-partone}
 the notation $\rewiring{d}$, which refers to the ``rewiring'' of a syntactic term $d$, 
turning all of the inputs into outputs
\[
  \tikzfig{pf-syntax-d}
  \qquad \xmapsto{\rewiring{\cdot}} \qquad
  \tikzfig{pf-d-cup}
\]
Working with ``rewired'' graphs is equivalent to working with the original ones, in the sense that $d \Rew{\rrule{l}{r}} e$ if and only if $\rewiring{d} \Rew{\rrule{\rewiring{l}}{\rewiring{r}}} \rewiring{e}$.
However, since the rewired rules have only one boundary, they are readily interpreted as hypergraphs with interfaces. That is, if $d$ corresponds to a cospan $i \rightarrow G \leftarrow j$, then $\rewiring{d}$ corresponds to $0 \rightarrow G \leftarrow i + j$, or simply $G \leftarrow i + j$.

Similarly, a syntactic rewriting rule $\rrule{\rewiring{l}}{\rewiring{r}}$ corresponds to a pair of hypergraphs with the same interface, $L \leftarrow i+j$ and $R \leftarrow i+j$, i.e. a span $L \leftarrow i + j \rightarrow R$. Hence, we can extend the definition of  $\allTosem{\cdot} \: \syntax{\Sigma} + \frob \to \DCospan{D}{\Hyp{\Sigma}}$ (\emph{cf.} Section~\ref{sec:combinatorial}) to rewriting rules by letting $\allTosem{\rrule{\rewiring{l}}{\rewiring{r}}} := L \leftarrow i+j \rightarrow R$. We now have all the ingredients to recall the correspondence theorem between (syntactic) rewriting modulo Frobenius relation $\Rightarrow$ and the DPOI rewriting relation $\DPOstep{}$.

\begin{therm}[\cite{BGKSZ-partone}]\label{thm:frobeniusrewriting}
Let $\rrule{l}{r}$ be a rewriting rule on $\syntax{\Sigma}+\frob$. Then
\[
d \Rew{\rrule{l}{r}} e  \quad \text{ iff } \quad \allTosem{\rewiring{d}} \DPOstep{\allTosem{\rrule{\rewiring{l}}{\rewiring{r}}}}  \allTosem{\rewiring{e}}\text{ .}
\]
\end{therm}



In this section, we will see that the full DPOI relation $\DPOstep{}$ is not sound for rewriting in the absence of Frobenius structure. To fix this problem, 
we will put a restriction on which pushout complements (i.e. contexts) are allowed in a rewriting step. If a rule is not left-linear, there could be many different 
pushout complements for a given match, each one yielding a different result. It was shown in~\cite{BGKSZ-partone} that this makes perfect sense 
when rewriting modulo Frobenius structure. However, without this structure not all results can be interpreted in a symmetric monoidal category.

\begin{example}\label{ex:unsoundcontext}
Consider $\Sigma = \{ \alpha_1 \colon 0 \to 1, \alpha_2 \colon 1\to 0, \alpha_3 \colon 1 \to 1\}$ and the PROP rewriting system $\mathcal{R} = \left\{\ \tikzfig{idwire}\  \Rew{} \ \tikzfig{unsound-context-rhs}\ \right\}$ on $\syntax{\Sigma}$. Its interpretation in $\FTerm{\Sigma}$ is given by the rule
\ctikzfig{unsound-context-rule}
The rule is not left-linear and therefore pushout complements are not necessarily unique for the application of this rule. For example, the following pushout complement yields a rewritten graph that can be interpreted as an arrow in an SMC
\ctikzfig{sound-context-dpo}
On the other hand, if we choose a different pushout complement, we obtain a rewritten graph that does not look like an SMC morphism
\ctikzfig{unsound-context-dpo}

The different outcome is due to the fact that $f$ maps $0$ to the leftmost and $1$ to the rightmost node, whereas $g$ swaps the assignments. Even though both rewriting steps could be mimicked at the syntactic level in $\syntax{\Sigma} + \frob$, the second hypergraph rewrite yields a hypergraph that is not in the image of any morphism of $\syntax{\Sigma}$. In particular, the rewritten graph in the second derivation is not monogamous: the outputs of $\alpha_1$ and $\alpha_3$ and the inputs of $\alpha_2$ and $\alpha_3$ have been glued together by the right pushout.
\end{example}


To rule out `bad' pushout complements, i.e. those not yielding monogamous acyclic hypergraphs after rewriting, we introduce the notion of \emph{boundary complement}, which requires that inputs only ever get glued to outputs (and vice-versa) in the two pushout squares of a DPO diagram.




\begin{definition}[Boundary complement] \label{def:boundarycomplement}
    For monogamous cospans $i \xrightarrow{a_1} L \xleftarrow{a_2} j$ and $n \xrightarrow{b_1} G \xleftarrow{b_2} m$
    and mono $f : L \to G$, a pushout complement as depicted in $(\dagger)$ below
    \[ \xymatrix@C=50pt@R=20pt{
        L \ar[d]_f \ar@{}[dr]|{(\dagger)} & {i + j} \ar[l]_{a = [a_1,a_2]}
              \ar[d]^{c = [c_1,c_2]}
        \ar@{}[dl]|(.8){\text{\large $\urcorner$}}
         \\
       G & L^\perp \ar[l]^{g} \\
       & n+m \ar[ul]^{[b_1,b_2]} \ar@{-->}[u]_{[d_1,d_2]}
       } \]
    is called a \textit{boundary complement} if $[c_1,c_2]$ is mono and there exist
    $d_1\: n\to L^\perp$ and $d_2\: m \to L^\perp$ making the above triangle commute
    and such that
    \begin{equation}\label{eq:boundary} n+j \xrightarrow{[d_1,c_2]} L^\perp \xleftarrow{[d_2,c_1]} m+i \end{equation}
    is a monogamous cospan.
\end{definition}

Intuitively, being a pushout complement, $L^\perp$ can be figured as $G$ with a `hole', filled by $L$. As $G$ and $L$ are both monogamous, 
This means that in $\syntax{\Sigma}$ we have 
\[\tikzfig{bc-context1}\]
where $g \colon n \to m$, $l \colon i \to j$, and $l^{\perp} \colon n+j \to m+i$ such that $\synTosem{g} = n \tl{} G \tr{} m$,  $\synTosem{l} = i \tl{} L \tr{} j$ and $\synTosem{l^\perp} = n+j \tl{} L^\perp \tr{} m+i$ are guaranteed to exist by monogamicity of the three cospans involved and Theorem~\ref{thm:charactImage}. Note that, in particular, the requirement that \eqref{eq:boundary} is monogamous enforces that the string diagram
\[\tikzfig{bc-context2}\]
properly lives in $\syntax{\Sigma}$ instead of $\syntax{\Sigma} + \frob$.

\medskip

The notion of boundary complement has the pleasant property of restoring uniqueness of pushout complements, even though we consider some rules which are not left-linear (namely, those with an identity morphism on the left-hand side of the syntactic rule).

\begin{proposition}\label{thm:uniquenessBoundaryCompl}
    When boundary complements in $\Hyp{\Sigma}$ exist, they are unique.
\end{proposition}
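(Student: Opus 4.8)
The plan is to reduce the uniqueness of the boundary complement $L^\perp$ to the uniqueness of an ordinary pushout complement of the span $i+j \xrightarrow{a} L \xrightarrow{f} G$ in $\Hyp{\Sigma}$ under a suitable monicity hypothesis. The key observation is that, although the rule leg $a = [a_1,a_2]\colon i+j \to L$ need not be mono (e.g.\ when $L$ is a single wire, both copies of the boundary node get identified), a boundary complement insists that $c = [c_1,c_2]\colon i+j \to L^\perp$ \emph{is} mono. So first I would record the general fact (proved in Part I, \cite[Section~4]{BGKSZ-partone}, and recalled in the excerpt) that in an adhesive category a pushout complement of $K \to L \to G$ is unique as soon as one fixes a mono $K \to C$ through which it factors; more precisely, pushout complements over which the map $K \to C$ is mono are determined up to unique isomorphism by $L \to G$ together with the composite $K \to L \to G$. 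The monogamy of the cospans in play will be what forces $c$ to be mono and will pin down \emph{which} mono it is.

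Second, I would unpack what the boundary-complement conditions buy us node-by-node and edge-by-edge. Since $f\colon L\to G$ is mono, the carrier $L^\perp$ of any pushout complement contains (an isomorphic copy of) $G$ minus the hyperedges of $L$, together with the boundary nodes of $L$ glued in along $c$. The hyperedges of $L^\perp$ are therefore forced: they are exactly the hyperedges of $G$ not in the image of $f$, with their attachment maps inherited from $G$. The only freedom is in how the nodes of $i+j$ are identified with nodes of $G\setminus f(L)$ — i.e.\ in the map $c$ — and in whether extra nodes could be added. Adding isolated extra nodes is ruled out because monogamy of the cospan \eqref{eq:boundary} forces every node of $L^\perp$ to have total degree constrained (in/out-degree $0$ or $1$ according to membership in the interface $j+n$ resp.\ $m+i$), and a pushout complement cannot create a node that fails to be in the image of $G$ along $g$ without violating the pushout property; so $g\colon L^\perp \to G$ is surjective on nodes and injective (it is mono by the adhesive pushout-complement analysis), hence $L^\perp$ and $G$ have the same nodes outside $f(L)$, plus the images of the boundary. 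What remains is to show the identification $c$ of boundary nodes is forced: here monogamy of $n\xrightarrow{b_1}G\xleftarrow{b_2}m$ and monogamy of \eqref{eq:boundary} together determine, for each input node of $L$, the unique node of $G$ of out-degree matching up, and dually for outputs — an input of $L$ must be glued to a node of $G$ that either lies in $\inp{G}$ or is the target of a (unique) hyperedge in $L^\perp$, and the monogamy degree constraints leave no choice. I would make this precise by a short case analysis on the four kinds of boundary node ($\inp{L}$ vs.\ $\out{L}$, landing in the interface of $G$ vs.\ in the "middle").

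Third, assembling these: given two boundary complements $L^\perp_1, L^\perp_2$, the above shows both have underlying hypergraph $G$ with the $f(L)$-hyperedges deleted, both have the same nodes, both have the same $c$ (forced), hence there is a unique isomorphism $L^\perp_1 \to L^\perp_2$ commuting with $c$, $g$, and the maps $[d_1,d_2]$ — which is exactly uniqueness up to the equivalence under which pushout complements (and cospans) are considered. The \textbf{main obstacle} I anticipate is the combinatorial bookkeeping in the third step of the middle paragraph: proving that the gluing map $c$ on boundary nodes is uniquely determined. One has to rule out "twisted" identifications of the kind exhibited in Example~\ref{ex:unsoundcontext} (where $f$ and $g$ disagreed on which boundary node goes where), and the tool for this is precisely that such a twist would destroy monogamy of \eqref{eq:boundary} — so the argument hinges on carefully tracking in- and out-degrees through both pushout squares and using that in a monogamous cospan the interface maps are mono and the degree of every node is pinned to $0$ or $1$. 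Everything else is either cited from \cite{BGKSZ-partone} (adhesivity of $\Hyp{\Sigma}$, uniqueness of mono pushout complements) or routine diagram chasing.
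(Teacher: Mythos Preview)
Your overall instinct is right --- monogamy of the cospan~\eqref{eq:boundary} is precisely what pins down the otherwise-ambiguous gluing, and Example~\ref{ex:unsoundcontext} is the right test case --- but two concrete claims in your plan are false and would derail the argument as written.

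First, $g\colon L^\perp \to G$ is \emph{not} mono in general. In the paper's computation the node component of $g$ is $a_\star + \id$, and the whole point is that $a = [a_1,a_2]$ need not be injective (think of the identity wire: $i=j=1$ both hit the same node of $L$). So your parenthetical ``it is mono by the adhesive pushout-complement analysis'' is wrong, and the conclusion you draw from it (that $L^\perp$ and $G$ have the same nodes outside $f(L)$) does not follow from that reasoning. Second, the general fact you want to cite from Part~I --- that pushout complements with mono $K\to C$ are unique --- is not a theorem, and Example~\ref{ex:unsoundcontext} is already a counterexample: both pushout complements there have $c$ mono, yet they differ. So the ``first'' step of your plan is a red herring; nothing in adhesive category theory alone will do this for you.

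The paper's proof bypasses these issues by computing directly in the presheaf category $\Hyp{\Sigma}$: pushouts are pointwise, so one first shows the underlying \emph{sets} $L^\perp_{k,l}$ (hyperedges) and $L^\perp_\star$ (nodes) are forced --- the former because $i+j$ is discrete, the latter by a short pushout-in-$\F$ argument yielding $L^\perp_\star \cong i+j+y$ with $c_\star$ the coproduct injection and $g_\star = a_\star + \id_y$. The only remaining freedom is in the source/target maps of the hyperedges of $L^\perp$: when $g_\star$ is not injective, a hyperedge $h\in L^\perp$ with $s(g(h))$ in the image of $a$ has two candidate preimages, one in $i$ and one in $j$. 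Monogamy of~\eqref{eq:boundary} forbids nodes in the image of $i$ from being sources (and dually $j$ from being targets), which kills the ambiguity. This is the precise version of the ``case analysis on four kinds of boundary node'' you gesture at, but the freedom lives in the source/target maps of $L^\perp$, not in $c$ --- $c$ is already pinned down as a coproduct injection once the node set is computed.
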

\begin{proof}
	   Since $\Hyp{\Sigma}$ is a presheaf category, a pushout of hypergraphs consists of pushouts on the underlying sets of nodes and hyperedges and an appropriate choice of source and target maps. Hence the underlying sets of $L^\perp$ must give pushout complements in the category $\F$ of finite sets
    \[
    \xymatrix{
    L_\star \ar[d]_{f_\star} &  i + j \ar[l]_{a_\star} \ar[d]^{c_\star} \ar@{}[dl]|(.8){\text{\large $\urcorner$}}  \\
    G_\star & L^\perp_\star \ar[l]^{g_\star}  \\
    }
    \qquad
    \xymatrix{
      L_{k,l} \ar[d]_{f_{k,l}} & 0 \ar[l] \ar[d] \ar@{}[dl]|(.8){\text{\large $\urcorner$}}  \\
   G_{k,l} & L^\perp_{k,l} \ar[l]^{g_{k,l}}  \\
    }
    \]
    where $G_{\star}$ and $G_{k,l}$ are the nodes and the $(k,l)$-hyperedges of $G$,
     and similarly for $L$ and $L^\perp$.
    Since $i + j$ is discrete, $G_{k,l}$ is a disjoint union, so $L^\perp_{k,l}$ must be $G_{k,l} \backslash L_{k,l}$. Since these are diagrams in $\F$, and $c,f$ are mono, we can rewrite the left pushout square as
    \[
    \xymatrix{
    l + x \ar[d]_{f_\star} &  i + j \ar[l]_{a_\star} \ar[d]^{c_\star} \ar@{}[dl]|(.8){\text{\large $\urcorner$}}  \\
    l + x + y & i + j + z \ar[l]^{g_\star}  \\
    }
    \]
    where the two downward arrows are coproduct injections and $l$ is the image of $a_\star$. One can easily verify that $l + x + z$ also gives a pushout for the given span, so we obtain a commuting isomorphism $l + x + y \cong l + x + z$, from which we conclude $y \cong z$ and that, up to isomorphism, the pushout complement on nodes must be
    \[
    \xymatrix{
   l + x \ar[d]_{f_\star}  &  i + j \ar[l]_{a_\star} \ar[d]^{c_\star} \ar@{}[dl]|(.8){\text{\large $\urcorner$}}  \\
    l + x + y  &  i + j + y \ar[l]^{g_\star} \\
    }
    \]
    from whence it follows that $g_\star = a_\star + id_n$.

\medskip
So far, we have proved that the sets $L^\perp_\star$ and $L^\perp_{k,l}$ are defined uniquely by the property of being a pushout complement. The only thing that remains is the definition of the source and target maps $s_{k,l}\: L^\perp_{k,l} \to L^\perp_\star$ and target $t_{k,l}\: L^\perp_{k,l} \to L^\perp_\star$ maps.
Since $g$ is a homomorphism, by abusing notation and denoting as $g_\star$ also its extension to sequences,
we have that for all hyperedges $h$
\[ g_\star(s_{k,l}(h)) = s_{k,l}(g_{k,l}(h)) \ \implies\ s_{k,l}(h) \in g_\star^{-1}(s_{k,l}(g_{k,l}(h))) \]
Since $g_\star$ is of the form $[(a_1)_\star, (a_2)_\star] + 1_n$, where $a_1$ and $a_2$ are mono, the inverse image $g_\star^{-1}(s_{k,l}(g_{k,l}(h)))$ contains at most two elements. In the case where it has one element, $s_{k,l}$ is uniquely fixed, so consider when it has two. It must then be the case that
\[ g_\star^{-1}(s_{k,l}(g_{k,l}(h))) = \{ v_1 \in i, v_2 \in j \} \]

But monogamy of \eqref{eq:boundary} says that the image of $i$ in $L^\perp$ cannot be the source of any hyperedge. Therefore, it must be $s_{k,l}(h) = v_2$. Similarly, if
\[ g_\star^{-1}(t_{k,l}(g_{k,l}(h))) = \{ v_1 \in i, v_2 \in j \} \]
then $t_{k,l}(h)$ must be $v_1$. Since there is at most one choice of source and target maps for $L^\perp$ making $g$ a homomorphism, $L^\perp$ must be unique.
\end{proof}


Boundary complements solve the problem highlighted in Example~\ref{ex:unsoundcontext}, in that restricting to boundary complements guarantees that the result of
a rewrite will be a monogamous acyclic hypergraph, which can be interpreted as morphisms in an SMC. However, a slightly more subtle problem remains: some graph
rewrites can be performed in such a way that the rule, target graph, and rewritten hypergraph are all monogamous acyclic, but they correspond to an equation between
morphisms that is not derivable using the SMC laws and the rules of a symmetric monoidal theory.

\begin{example}\label{ex:unsound} Consider a $\Sigma = \{ e_1 \: 1 \to 2, {e_2 \: 2 \to 1}, {e_3 \: 1 \to 1} , e_4 \: 1 \to 1\}$ and the following rewriting rule in $\syntax{\Sigma}$
  \begin{eqnarray}
    \rrule{ \ \tikzfig{unsound-lhs}\ \colon 2 \to 2 \ }{\ \ \tikzfig{unsound-rhs}\ \colon 2\to 2 } \label{eq:counterexSoundness1}
  \end{eqnarray}
  Left and right side are interpreted in $\FTerm{\Sigma}$ as cospans
  \[
    \tikzfig{unsound-lhs-cospan}
    \qquad
    \tikzfig{unsound-rhs-cospan}
  \]
  We introduce another diagram $c \: 1\to 1$ in $\syntax{\Sigma}$ and its interpretation in $\FTerm{\Sigma}$
  \[
    \tikzfig{unsound-target}
    \quad \xmapsto{\synTosem{\cdot}} \quad
    \tikzfig{unsound-target-cospan}
  \]
  Now, rule \eqref{eq:counterexSoundness1} cannot be applied to $c$, even modulo the SMC equations. However, their interpretation yields a DPO rewriting step in $\FTerm{\Sigma}$ as below
  \[
    \scalebox{0.9}{\tikzfig{unsound-dpo}}
  \]
  Observe that the leftmost pushout above \emph{is} a boundary complement: the input-output partition is correct. Still, the rewriting step cannot be mimicked at the syntactic level using rewriting modulo the SMC laws. That is because, in order to apply our rule, we need to deform the diagram such that $e_3$ occurs outside of the left-hand side. This requires moving $e_3$ either before or after the occurence of the left-hand side in the larger expression, but both of these possibilities require a feedback loop
  \[ \tikzfig{e3-before} \qquad\qquad \tikzfig{e3-after} \]
  Hence, if the category does not have at least a traced symmetric monoidal structure~\cite{Joyal_tracedcategories}, there is no way to apply the rule.
\end{example}

The source of the problem in Example~\ref{ex:unsound} is the fact that the image of the match $f$ forms a non-convex, ``U-shaped'' sub-graph of the target graph. In other words, we identify a forward-directed path of hyperedges going out of the image of $f$ and back inside again. Hyperedges in such a path (namely $e_3$ in the example) cause obstructions to rewriting in an SMC. Hence, we introduce the notion of \emph{convex} matches, which forbid forward-directed paths from outputs to inputs.


\begin{definition}[Convex match] \label{def:convexampleatching}
We call $m : L \to G$ in $\Hyp{\Sigma}$ a \textit{convex match} if it is mono and its image is convex.
\end{definition}

We saw from Lemma~\ref{lemma:convexfact} that, for any convex sub-graph  $L$ of a monogamous acyclic hypergraph $G$, $G$ can be decomposed into parts using ``$\tns$'' and ``$\poi$'', where one of those parts is $L$. This will play a crucial role in our soundness theorem.

We now combine the notions of boundary complement and of convex match to tailor a family of DPOI rewriting steps which only yield legal $\syntax{\Sigma}$-rewriting.

\begin{definition}\label{def:convex-dpoi}
Given $D \leftarrow n+m$ and $E \leftarrow n+m$ in $\Hyp{\Sigma}$, \emph{$D$ rewrites convexely into $E$ with interface $n+m$} --- notation
$(D \tl{} n+m) \rigidDPOstep{\mathcal{R}} (E \tl{} n+m)$ --- if there exist rule $L \tl{} i+j  \tr{} R$ in $\mathcal{R}$ and object $C$ and cospan arrows
$i+j \rightarrow C \leftarrow n+m$ in $\Hyp{\Sigma}$
such that the diagram below commutes and its marked squares are pushouts
%
\begin{equation}\label{eq:dpo3}
\raise25pt\hbox{$
\xymatrix@R=15pt@C=20pt{
L \ar[d]_{f}   &  i+j \ar[d]
 \ar@{}[dl]|(.8){\text{\large $\urcorner$}}
 \ar@{}[dr]|(.8){\text{\large $\ulcorner$}}
 \ar[l]_{[a_1,a_2]} \ar[r]^{[b_1,b_2]}  & R \ar[d] \\
 D &  C \ar[l] \ar[r]  & E \\
&  n+m \ar[u] \ar[ur]_{[p_1,p_2]}  \ar[ul]^{[q_1,q_2]}
}$}
\end{equation}
and the following conditions hold
 \begin{itemize}
 \item $f \: L \to D$ is a convex match;
 \item $i+j \to C \to D$ is a boundary complement in the leftmost pushout.
\end{itemize}
\end{definition}
The relation $\rigidDPOstep{\mathcal{R}}$ is contained in the DPOI rewriting relation $\DPOstep{\mathcal{R}}$ (Definition~\ref{def:dpoi}), 
the difference being that the leftmost pushout must consist of a convex match and a boundary complement.

We have now all the ingredients to prove the adequacy of convex DPO rewriting with respect to rewriting in $\syntax{\Sigma}$.

\begin{therm}\label{th:adequacyRigidSMT}Let $\mathcal{R}$ by any rewriting system on $\syntax{\Sigma}$.
Then, 
\begin{eqnarray*} d\Rightarrow_{\mathcal{R}}e & \text{ iff } & \allTosem{\rewiring{d}} \rigidDPOstep{\allTosem{\rewiring{\mathcal{R}}}} \allTosem{\rewiring{e}}.\end{eqnarray*}
\end{therm}
\begin{proof} For the only if direction, by Theorem~\ref{thm:frobeniusrewriting} $d\Rightarrow_{\mathcal{R}}e$ implies $\allTosem{\rewiring{d}} \DPOstep{\allTosem{\rewiring{\mathcal{R}}}} \allTosem{\rewiring{e}}$. One may check that the argument constructs a \emph{convex} DPO rewriting step, thus yielding the desired statement.

More in detail, our assumption gives that
\begin{eqnarray*}
  \tikzfig{pf-syntax-d} &=& \tikzfig{pf-syntax-lhs} \\
  \tikzfig{pf-syntax-e} &=& \tikzfig{pf-syntax-rhs}
\end{eqnarray*}
and hence the following equalities hold in $\syntax{\Sigma} + \frob$, where $c^{\star}_i$, $i \in \{1,2\}$, is notation for $\scalebox{0.7}{\tikzfig{cstar}}$  
\begin{eqnarray}
  \tikzfig{pf-d-cup} &=& \tikzfig{pf-syntax-lhs-cup} \label{eq:rewcompld}\\
  \tikzfig{pf-e-cup} &=& \tikzfig{pf-syntax-rhs-cup} \label{eq:rewcomple}
\end{eqnarray}
We now define
\begin{eqnarray*}
  \left( 0 \tr{} D \tl{[q_1,q_2]} n+m\right) &
  \df &
  \allTosem{\rewiring{d}} = \allTosembigg{\ \tikzfig{pf-d-cup}\ } \\
  \left( 0 \tr{} E \tl{[p_1,p_2]} n+m\right) &
  \df &
  \allTosem{\rewiring{e}} = \allTosembigg{\ \tikzfig{pf-e-cup}\ } \\
  \left( 0 \tr{} L \tl{[a_1,a_2]} i+j\right) &
  \df &
  \allTosem{\rewiring{l}} = \allTosembigg{\ \tikzfig{pf-l-cup}\ } \\
  \left( 0 \tr{} R \tl{[b_1,b_2]} i+j\right) &
  \df &
  \allTosem{\rewiring{r}} = \allTosembigg{\ \tikzfig{pf-r-cup}\ } \\
  \left( i+j \tr{} C \tl{} n+m \right) &
  \df &
  \allTosembigg{\ \ \tikzfig{pf-syntax-ctx}\ \ }. \\
\end{eqnarray*}
By these definitions and \eqref{eq:rewcompld}-\eqref{eq:rewcomple} it follows that
\begin{eqnarray*}
\left( 0 \tr{} D \tl{} n+m \right) = \left( 0 \tr{} L \tl{} i+j\right) \poi \left( i+j \tr{} C \tl{} n+m\right) \\
\left( 0 \tr{} E \tl{} n+m \right) = \left( 0 \tr{} R \tl{} i+j\right) \poi \left( i+j \tr{} C \tl{} n+m\right).
\end{eqnarray*}
Since composition of cospans is defined by pushout, we have a commutative diagram with two pushouts as in~\eqref{eq:dpo3}. It remains to check that the match $L \to D$ is convex and that $C$ is a boundary complement: these conditions can be verified by definition of the involved components. Therefore, $\allTosem{\rewiring{d}} \rigidDPOstep{\allTosem{\rewiring{\mathcal{R}}}} \allTosem{\rewiring{e}}$ by application of the rule $\rrule{\allTosem{\rewiring{l}}}{\allTosem{\rewiring{r}}}$.


We now turn to the converse direction. Let 
\begin{eqnarray*}
\allTosem{\rewiring{d}} \dfop \vcenter{\xymatrix@R=5pt@C=3pt{ & D& \\ 0 \ar[ur] && \ar[ul]_{[q_1,q_2]} n+m}} & & \allTosem{\rewiring{e}} \dfop \vcenter{\xymatrix@R=5pt@C=3pt{ & E & \\ 0 \ar[ur] && \ar[ul]_{[p_1,p_2]} n+m}}.
\end{eqnarray*}
Our assumption gives us a diagram as in~\eqref{eq:dpo3}, with application of a rule $\rrule{\allTosem{\rewiring{l}}}{\allTosem{\rewiring{r}}}$ in $\allTosem{\rewiring{\mathcal{R}}}$. We now want to show that $d\Rightarrow_{\mathcal{R}}e $ with rule $\rrule{l}{r}$, say of type $(i,j)$. Now, because $n \tr{q_1} D \tl{q_2} m = \synTosem{d}$, it is monogamous  acyclic by Theorem~\ref{thm:charactImage}. Since the match $f \: L \to D$  in~\eqref{eq:dpo3} is convex, Lemma~\ref{lemma:convexfact} yields a decomposition of $n \tr{q_1} D \tl{q_2} m$ in terms of monogamous  acyclic cospans
\begin{equation*}
\left( n \tr{} C_1 \tl{} i\!+\!k \right) \poi\!
\begin{array}{cc}
\left( k \tr{id} k \tl{id} k \right)\\
\tns \\
\left( i \tr{} L \tl{} j \right)
\end{array}
\poi
\left( j\!+\!k\tr{} C_2 \tl{} m \right).
\end{equation*}
Applying again Theorem~\ref{thm:charactImage} we obtain $c_1$, $c_2$ in $\syntax{\Sigma}$ such that
\begin{align*}
\synTosem{c_1} = n\tr{} C_1 \tl{} i+k \qquad \synTosem{c_2} = j+k\tr{} C_2 \tl{} m.
\end{align*}
By functoriality of $\synTosem{\cdot}$,
$\synTosem{d} = \synTosem{c_1 \poi (id \tns l) \poi c_2} $ and,
since $\synTosem{\cdot}$ is a faithful PROP morphism,
$d = c_1 \poi (id \tns l) \poi c_2$.
Thus we can apply the rule $\rrule{l}{r}$ on $e$, which yields
$e = c_1 \poi (id \tns r) \poi c_2$ such that $d \Ra_{\mathcal{R}} e$.
We can conclude that $\synTosem{e} = n\tr{p_1} E \tl{p_2} m$ because boundary
complements are unique (Proposition~\ref{thm:uniquenessBoundaryCompl}).
\end{proof}

Hence, we have shown soundness and adequacy of convex DPOI rewriting for symmetric monoidal theories. In other words, whenever we want to perform rewriting in a free symmetric monoidal category, we could just as well do convex DPOI.

\begin{remark}\label{rem:efficient-convex-dpoi}
  A natural question is ask is whether we can do convex DPOI rewriting efficiently. This is not obvious since computing the match in a DPOI step involves solving a subgraph isomorphism problem and, as we saw in Section 4.5 of Part 1~\cite{BGKSZ-partone}, enumerating pushout complements can require a substantial amount of computation for general. The issue with matches is not really a problem since we consider rewriting with rules whose left-hand side is of fixed constant size, which is typically much smaller than the target graph. In this regime, efficient subgraph isomorphism algorithms exist going back (at least) to Ullmann~\cite{ullmann1976algorithm} and can be easily adapted to our setting.

  In fact, we can do even better in the case of monogamous hypergraphs. One can construct a homomorphism $m : L \to G$ by traversing the nodes and hyperedges of $L$ and mapping them one-by-one. At each step in the traversal, the image of the next node (resp. hyperedge) is uniquely fixed by the image of an adjacent hyperedge (resp. node), so the match will be uniquely fixed by the image of a single node in each connected component of $L$. Hence, if $L$ is connected, we can fix a starting node $v$ in $L$ and check if for each node $v'$ in $G$ the setting $m(v) := v'$ yields a valid match in time linear in $L$. Once a match is found, we can check convexity by computing the successors of the outputs of $L$ in $G$ and checking whether any input of $L$ is contained in that set, which has worse-case complexity $O(|G_\star|)$, since we need to visit each node in $G$ at most once. That is, we can enumerate matches of $L$ and $G$ in $O(|L_\star||G_\star|^2)$ time.

  The second issue is solved for convex DPOI by requiring pushout complements to be boundary complements, which are unique, as we saw in Proposition~\ref{thm:uniquenessBoundaryCompl}. Whereas in the general case, we may have to search an exponential space of potential pushout complements, boundary complements force the fact that there is at most one solution, which can be constructed efficiently from the maps $i + j \to L \to G$. Hence convex DPOI rewriting is amenable to efficient implementations.
\end{remark}

We conclude this section by showing that, for certain well-behaved rewriting systems, convexity of matches follows automatically.

\begin{definition}\label{def:left-connected} A monogamous acyclic cospan
$n \tr{f} G \tl{g} m$
is \emph{strongly connected} if for every input $x \in f(n)$ and output $y \in g(n)$ there exists a  path from $x$ to $y$. A DPO rewriting system is \textit{left-connected}
if it is left-linear and, for every rule $L \leftarrow i + j \rightarrow R$, the induced cospans $i \rightarrow L \leftarrow j$ and $i \rightarrow R \leftarrow j$ are monogamous
acyclic and $i \to L \leftarrow j$ is strongly connected. We call a PROP rewriting system $\mathcal{R}$ on $\syntax{\Sigma}$ left-connected if for every
$\rrule{l}{r} \in \mathcal{R}$ the associated DPO rule $\allTosem{\rrule{\rewiring{l}}{\rewiring{r}}}$ is left-connected.
\end{definition}

In Definition~\ref{def:left-connected}, strong connectedness prevents non-convex
matches as in Example~\ref{ex:unsound}, whereas left-linearity guarantees
uniqueness of the pushout complements, and prevents the problem in
Example~\ref{ex:unsoundcontext}. We are then able to prove the following theorem,
for the not necessarily convex DPOI rewriting relation $\DPOstep{}$.

%
\begin{therm}\label{thm:stronglyconnected}
Let $\mathcal{R}$ be a left-connected rewriting system on $\syntax{\Sigma}$. Then
\begin{enumerate}
  \item if $d \Rew{\mathcal{R}} e$
    then
    $\allTosem{\rewiring{d}}
    \rigidDPOstep{\allTosem{\rewiring{\mathcal{R}}}}
    \allTosem{\rewiring{e}}$;
  \item if
    $\allTosem{\rewiring{d}}
    \rigidDPOstep{\allTosem{\rewiring{\mathcal{R}}}}
    \allTosem{\rewiring{e}}$
    then $d \Rew{\mathcal{R}} e$.

\end{enumerate}
\end{therm}

\begin{proof}
  (1) follows from Theorem~\ref{thm:frobeniusrewriting}. For (2), suppose $\allTosem{\rewiring{d}} \cong G \leftarrow n + m$ and the rewriting relation arose from applying a left-connected rule $L \leftarrow i + j \rightarrow R$ at match $p : L \to G$. By left-connectedness, there exists a path from every input of $L$ to every output. Hence, this will also be the case for the sub-graph $p(L)$. If there was a directed path from an output of $p(L)$ to an input, this would induce a directed cycle. But since $G$ is a monogamous acyclic hypergraph, this cannot be the case. Hence, $m(L)$ is a convex sub-graph of $G$.

  Furthermore, since the rewriting rule is left-linear, $L \leftarrow i + j$ is mono. Hence, we can compute the (unique) pushout complement by removing the hyperedges and non-interface nodes of $p(L)$ from $G$. Since $L$ and $G$ are both monogamous acyclic hypergraphs, the resulting pushout complement will always be a boundary complement. Hence the DPOI rewriting step must in fact be a convex DPOI step, so we can apply Theorem~\ref{th:adequacyRigidSMT} to complete the proof.
\end{proof}

As a consequence of this theorem, if a rewriting system is left-connected, we can forego the convexity check mentioned in Remark~\ref{rem:efficient-convex-dpoi}, so we can enumerate matches of a single rule with left-hand side $L$ in $G$ in time $O(|L_\star||G_\star|)$.

\subsection{Characterisation for coloured PROPs} It is a routine exercise to generalise the results in this section to coloured props. First, fixed a set $\col$ of colours and a monoidal signature $\Sigma$ on $\col$, \cite{BGKSZ-partone} also states a multi-coloured version of Theorem~\ref{thm:frobeniusrewriting}, proving a correspondence between rewriting in $\syntax{\col, \Sigma}$ modulo the equations of $\frob_{\col}$, and DPOI rewriting in $\Hyp{\col, \Sigma}$. One may then define convex DPOI rewriting in $\Hyp{\col, \Sigma}$, in the same way as we did for $\Hyp{\Sigma}$, and show correspondence results between this and rewriting in $\syntax{\col, \Sigma}$, analogous to Theorem~\ref{th:adequacyRigidSMT} and \ref{thm:stronglyconnected}: the colouring on nodes does not affect how these characterisations are formulated and proven.

\begin{therm}
Let $\mathcal{R}$ by any rewriting system on $\syntax{\col,\Sigma,}$.
Then 
\[ d\Rightarrow_{\mathcal{R}}e  \text{  iff  }  \allTosem{\rewiring{d}} \rigidDPOstep{\allTosem{\rewiring{\mathcal{R}}}} \allTosem{\rewiring{e}}.\]
	Furthermore, let $\mathcal{R}$ be left-connected. Then
\begin{enumerate}
  \item if $d \Rew{\mathcal{R}} e$
    then
    $\allTosem{\rewiring{d}}
    \rigidDPOstep{\allTosem{\rewiring{\mathcal{R}}}}
    \allTosem{\rewiring{e}}$;
  \item if
    $\allTosem{\rewiring{d}}
    \rigidDPOstep{\allTosem{\rewiring{\mathcal{R}}}}
    \allTosem{\rewiring{e}}$
    then $d \Rew{\mathcal{R}} e$.
\end{enumerate}
\end{therm}

\section{Case Studies}
\label{sec:casestudy}

The two most fundamental properties of interest for a rewriting system are \textit{termination} and \textit{confluence}. A rewriting relation is terminating if it admits no infinite 
sequence of rewrites, and it is confluent if any pair of hypergraphs (or terms, etc.) arising from $G$ by a sequence of rewriting steps can eventually be rewritten to the same 
hypergraph. Taken together, these properties imply the existence of unique normal forms.\footnote{For background on termination and confluence in term rewriting systems, 
see for instance~\cite{Terese03}. Termination for string diagram rewriting has been studied as an instance of higher-dimensional term rewriting, 
see~\cite{guiraud2006termination} and the discussion in Section~\ref{sec:conclusions}.}

We will now apply the framework we have developed to two specific symmetric monoidal theories: Frobenius semi-algebras and bialgebras. For both of these structures, we construct the associated DPOI rewriting system and show that it is terminating. We will also show that the first theory is not confluent, by adapting a counter-example due to Power to the setting of convex rewriting. The second theory is confluent, but we leave the proof for the sequel paper, where we develop critical pair analysis for convex rewriting~\cite{BGKSZ-partthree}.

As with term rewriting theory, an important tool for termination proofs is that of \textit{reduction orderings}. For a preorder $\preceq_a$ on hypergraphs, we can define the associated equivalence relation $\sim_a$ and the strict ordering $\prec_a$ as follows
\[
  G \sim_a H \iff (G \preceq_a H \wedge H \preceq_a G)
    \qquad\qquad
    G \prec_a H \iff (G \preceq_a H \wedge G \not\sim_a H)
\]

\begin{definition}\label{def:red-ord}
  A preorder $\preceq_a$ is called a \textit{reduction ordering} for a rewriting system $\mathcal R$ if it is well-founded (i.e.~has no infinite decreasing chains with respect to $\prec_a$) and
  \[
    G \rigidDPOstep{\mathcal R} H \
    \implies\
    H \prec_a G
  \]
  Similarly, a preorder $\preceq_a$ is called a \textit{weak reduction ordering} for $\mathcal R$ if it is well-founded and
  \[
    G \rigidDPOstep{\mathcal R} H \
    \implies\
    H \preceq_a G
  \]
\end{definition}

%
Clearly the existence of a reduction ordering forbids infinite sequences of rewrites, hence any $\mathcal R$ that admits a reduction ordering is terminating.

A common strategy in termination proofs is to define reduction orderings in pieces which are then combined lexicographically. For pre-orders $\preceq_a$ and $\preceq_b$, we define the \textit{lexicographic ordering} $\preceq_{a,b}$ as follows
\[
  H \preceq_{a,b} G \iff
    (H \prec_a G \vee (H \sim_a G \wedge H \preceq_b G))
\]
The following lemma can be shown straightforwardly from the definitions above.

\begin{lemma}\label{lem:combine-red-ord}
  For a rewriting system $\mathcal R = \mathcal R_1 \cup \mathcal R_2$, if $\preceq_a$ is a reduction ordering for $\mathcal R_1$, $\preceq_a$ is a weak reduction ordering for $\mathcal R_2$, and $\preceq_b$ is a reduction ordering for $\mathcal R_2$, then $\preceq_{a,b}$ is a reduction ordering for $\mathcal R$.
\end{lemma}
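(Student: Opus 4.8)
The plan is to verify the two defining conditions of a reduction ordering for $\preceq_{a,b}$ — well-foundedness, and the requirement that every convex DPOI step strictly decreases the order — by an elementary lexicographic argument, splitting each time according to which of $\mathcal R_1$, $\mathcal R_2$ the relevant rule belongs to. Everything reduces to unfolding the definitions of $\preceq_{a,b}$, $\sim_a$, $\sim_b$, $\prec_a$, $\prec_b$, so I will only sketch the bookkeeping.

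First I would establish the \emph{decrease} property. Suppose $G \rigidDPOstep{\mathcal R} H$; the step uses a rule from $\mathcal R_1$ or from $\mathcal R_2$. If it is from $\mathcal R_1$, then $H \prec_a G$ by hypothesis, and I claim $H \prec_a G$ already implies $H \prec_{a,b} G$: the first disjunct of the definition gives $H \preceq_{a,b} G$, while $H \not\sim_a G$ forbids $G \preceq_{a,b} H$ (it would force either $G \prec_a H$ or $G \sim_a H$, both impossible given $H \preceq_a G$ and $H \not\sim_a G$), so $H \not\sim_{a,b} G$. If the step is from $\mathcal R_2$, then $H \preceq_a G$ because $\preceq_a$ is a weak reduction ordering for $\mathcal R_2$; if moreover $H \not\sim_a G$ we are back in the previous case, and if $H \sim_a G$ then $H \prec_b G$ since $\preceq_b$ is a reduction ordering for $\mathcal R_2$, whence $H \preceq_{a,b} G$ through the second disjunct and, once again, $G \not\preceq_{a,b} H$ (it would force either $G \prec_a H$, contradicting $H \sim_a G$, or $G \preceq_b H$, contradicting $H \prec_b G$). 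In all cases $H \prec_{a,b} G$.

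Next I would prove well-foundedness of $\prec_{a,b}$. Assume, for contradiction, an infinite chain $G_0 \succ_{a,b} G_1 \succ_{a,b} G_2 \succ_{a,b} \cdots$. Unfolding $\preceq_{a,b}$, each step satisfies either $G_{i+1} \prec_a G_i$ or ($G_{i+1} \sim_a G_i$ and $G_{i+1} \preceq_b G_i$); in particular the sequence is non-increasing in the well-founded partial order that $\preceq_a$ induces on $\sim_a$-classes, so the $\sim_a$-class stabilises past some index $N$. From $N$ onward every step has $G_{i+1} \sim_a G_i$ and $G_{i+1} \preceq_b G_i$, and $G_{i+1} \not\sim_b G_i$ — for otherwise $G_i \sim_{a,b} G_{i+1}$, contradicting strictness of the chain — so $G_{i+1} \prec_b G_i$ for all $i \ge N$, contradicting well-foundedness of $\preceq_b$.

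The argument is entirely routine; the only point requiring a little care is keeping the interplay between the two preorders, their induced equivalences, and their strict parts straight when checking $G \not\preceq_{a,b} H$ in each sub-case — equivalently, confirming that $\preceq_{a,b}$ really is a preorder whose strict part behaves lexicographically. I do not expect any genuine obstacle.
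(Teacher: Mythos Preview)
Your proposal is correct and follows essentially the same lexicographic approach as the paper: case-split on whether the applied rule lies in $\mathcal R_1$ or $\mathcal R_2$ for the decrease property, and use well-foundedness of $\preceq_a$ then $\preceq_b$ for well-foundedness of $\preceq_{a,b}$. The paper is terser---it first records the characterisation $H \prec_{a,b} G \iff (H \prec_a G \vee (H \sim_a G \wedge H \prec_b G))$ and then dispatches well-foundedness in one line---whereas you spell out the infinite-chain argument and the check that $G \not\preceq_{a,b} H$ in each sub-case, but the underlying reasoning is the same.
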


\begin{proof}
  Using the definition of a lexicographic ordering, we can see that the associated strict ordering can be expressed as follows
  \[
    H \prec_{a,b} G \iff (H \prec_a G \vee (H \sim_a G \wedge H \prec_b G))
  \]
  From this, we see that $\preceq_{a,b}$ is well-founded whenever $\preceq_a$ and $\preceq_b$ are.
  Then, if $G \rigidDPOstep{r_1} H$ for some $r_1 \in \mathcal R_1$, then $H \prec_a G$, so $H \prec_{a,b} G$. If $G \rigidDPOstep{r_2} H$ for some $r_2 \in \mathcal R_2$, then $H \preceq_a G$ and $H \prec_b G$. From $H \preceq_a G$, it is either the case that $H \prec_a G$ or $H \sim_a G$, which in either case yields $H \prec_{a,b} G$.
\end{proof}

With this bit of rewriting theory in hand, we are ready to look at our two case studies.

\subsection{Frobenius semi-algebras}
\label{semiFrob}

\textit{Frobenius semi-algebras} are Frobenius algebras lacking the unit and counit equations. That is, they are the free PROP generated by the signature
\[
  \left\{\ \mu := \Wmult, \delta := \Wcomult\ \right\}
\]
modulo the following equations
\begin{equation}\label{eq:fsa-rules}
  \tikzfig{fsa-rules}
\end{equation}
It is interesting to study such structures, because full (co)unital Frobenius algebras always induce a compact closed structure. Hence, categories that are not compact closed, such as infinite-dimensional vector spaces, do not in general have Frobenius algebras. They can nevertheless have Frobenius semi-algebras, which form the basis of interesting algebraic structures relevant to quantum theory, such as H*-algebras~\citep{abramsky2012h}.

Since Frobenius semi-algebras lack many of the equations of a Frobenius algebra, we cannot use the technique for rewriting modulo Frobenius developed in~\cite{BGKSZ-partone}. Nevertheless, we can represent this theory using a hypergraph rewriting system $\FS{}$, defined as follows
\begin{center}
  \scalebox{0.7}{\tikzfig{fsa-rules-dpo}}
\end{center}

We first give a proof of termination for this rewriting system. This would be quite involved if we wished to prove it using syntactic rewriting, modulo the equations of an SMC, but here we show it is relatively straightforward, using some graph-theoretic reduction orderings.

We first deal with (co)associativity. It should be the case that na\"ively applying rules $\FS{1}$ and $\FS{2}$ will eventually terminate with all trees of multiplications and comultiplications associated to the right (or bottom, as we are reading diagrams left-to-right). More formally, for any vertex $x$, let a $\mu$-tree with root $x$ be a maximal tree of $\mu$-hyperedges with output $x$. Similarly, a $\delta$-tree with root $x$ is a maximal tree of $\delta$-hyperedges with input $x$.


For a $\mu$-hyperedge $h$, let the $\mathcal{L}$-weight $\ell(h)$ be the size of the $\mu$-tree whose root is the first input of $h$. Similarly, for a $\delta$-hyperedge, let $\ell(h)$ be the size of the $\delta$-tree whose root is the first output of $h$. Let $\ell(h) = 0$ otherwise and
\[
  \mathcal L(G) := \sum_{h \in G_{2,1}\,\cup\, G_{1,2}} \ell(h)
\]

\begin{lemma}\label{lem:fs-L}
  The following is a reduction ordering for $\{\FS 1, \FS 2\}$
  \[
    H \preceq_{\mathcal L} G \iff
      \mathcal L(H) \leq \mathcal L(G)
  \]
\end{lemma}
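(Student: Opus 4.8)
The plan is to show that each application of $\FS 1$ or $\FS 2$ strictly decreases $\mathcal L(G)$, while well-foundedness of $\preceq_{\mathcal L}$ is immediate since $\mathcal L(G)\in\mathbb N$ and $\leq$ on $\mathbb N$ is well-founded. So the heart of the matter is the implication $G \rigidDPOstep{\{\FS 1,\FS 2\}} H \implies \mathcal L(H) < \mathcal L(G)$, and by the obvious left-right symmetry between $\mu$ and $\delta$ it suffices to treat $\FS 1$, i.e.~the reassociation of a pair of $\mu$-hyperedges from left-nested to right-nested.

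First I would set up notation for a single $\FS 1$-step: it rewrites a convex two-$\mu$-hyperedge subgraph of $G$ of the shape $(\mu\tns\id)\poi\mu$ into $(\id\tns\mu)\poi\mu$, glued into a context $C$ along the boundary. By convexity and monogamy (Theorem~\ref{th:adequacyRigidSMT}, Lemma~\ref{lemma:convexfact}), $G$ decomposes so that exactly these two hyperedges $h_1,h_2$ are replaced by two new hyperedges $h_1',h_2'$; all other hyperedges, all $\mu$-trees rooted outside the redex, and all source/target incidences outside the redex are untouched. Then I would argue that the only L-weights that can change are those of hyperedges lying in a $\mu$-tree that passes through the redex. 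Concretely: let $x$ be the input node that before the step is the first input of the outer $\mu$ ($h_2$), feeding from the inner $\mu$ ($h_1$); after the step the first input of the new outer $\mu$ is a fresh node whose $\mu$-tree is a single hyperedge. The key local computation is that if the inner $\mu$-tree on the left has size $s_1$ and the subtree hanging off the second input of $h_1$ has size $s_2$ (so $s_1 = s_2 + (\text{rest}) + 1 \geq s_2+1$), then after reassociation the contribution of the redex hyperedges plus any $\mu$-hyperedge immediately downstream drops; more carefully, the L-weight of $h_2$ (equal to the size of the $\mu$-tree rooted at its first input, i.e.~at least the full left subtree) is strictly larger than the L-weight of $h_2'$ plus the L-weight of $h_1'$ after the step, because right-nesting moves hyperedges out of the "first-input" position of the tree above. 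I would make this precise by exhibiting an explicit injection from the multiset of $\mu$-trees-rooted-at-first-inputs before the step into the one after, which is strictly weight-decreasing on the redex.

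The cleanest way to organize this, and the step I expect to be the main obstacle, is proving that \emph{nothing outside the redex changes its L-weight}. This requires checking that for any $\mu$-hyperedge $h$ of $G$ not equal to $h_1,h_2$, the $\mu$-tree rooted at the first input of $h$ in $G$ and in $H$ have the same size. Hyperedges strictly downstream of the redex are unaffected because the output node of the redex and its connectivity downstream are preserved by the boundary complement; hyperedges strictly upstream are unaffected because $\mu$-trees only grow downstream through first-input edges and the redex's first input tree is, on the left side, $(\mu\tns\id)\poi\mu$-shaped and on the right side replaced in a way that only decreases sizes of trees that reach into it. The subtlety is hyperedges whose $\mu$-tree \emph{enters} the redex: here I would use that a $\mu$-tree is determined by following first-input edges, so the only external hyperedge whose tree can pass through $h_2$ is one feeding $h_2$'s output along someone's first input — and then the reassociation changes the size of that tree by exactly the difference that is charged against the redex, which I would fold into the single inequality above rather than double-counting. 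Once the bookkeeping is pinned down, the conclusion $\mathcal L(H) = \mathcal L(G) - (\text{positive redex decrease}) < \mathcal L(G)$ follows, and symmetrically for $\FS 2$, completing the proof that $\preceq_{\mathcal L}$ is a reduction ordering for $\{\FS 1,\FS 2\}$.
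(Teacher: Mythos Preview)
Your approach is essentially the paper's, but you have overcomplicated it and muddled the execution. The paper's proof is very short. First, the L-weight of any $\mu$-hyperedge \emph{outside} the image of the left-hand side is unchanged: both sides of $\FS 1$ contain exactly two $\mu$-hyperedges and have the same three inputs, so the size of any $\mu$-tree that happens to pass through the redex is preserved. This dissolves what you call the ``main obstacle''---downstream L-weights do not change at all, so there is no double-counting to fold in and no injection to construct. Second, for the two $\mu$-hyperedges \emph{inside} the redex, let $a,b,c$ be the sizes of the $\mu$-trees attached to inputs $0,1,2$. Before the rewrite the inner and outer $\mu$ have L-weights $a$ and $a+b+1$; after, they have L-weights $b$ and $a$. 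The net change in $\mathcal L$ is therefore $-(a+1)<0$. The case of $\FS 2$ is symmetric.

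Your local computation with $s_1,s_2$ never reaches this clean comparison, and the sentence about the downstream tree changing ``by exactly the difference that is charged against the redex'' conflates tree \emph{size} (which is invariant) with L-weight (which changes only on the two redex hyperedges). Replace your third paragraph with the one-line observation that the redex has two $\mu$'s on each side, and replace your second paragraph with the explicit $(a,\,a{+}b{+}1)\mapsto(a,\,b)$ calculation.
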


\begin{proof}
  $\mathcal L$ is $\mathbb N$-valued, so $\preceq_{\mathcal L}$ is well-founded.

  Applying the rule $\FS 1$ has no effect on the $\mathcal{L}$-weight of any $\mu$-hyperedges outside of the image of the left-hand side. 
  Suppose there are $\mu$-trees of size $a, b, c$ connected to inputs $0, 1, 2$ of the left-hand side, respectively. 
  The $\mathcal{L}$-weight of the two $\mu$-hyperedges on the left-hand side are thus $a$ and $a + b + 1$, whereas on the right-hand side they 
  are $a$ and $b$. Hence $ \preceq_{\mathcal L}$ is strictly decreased by $\FS 1$. The property for $\FS 2$ follows symmetrically.
\end{proof}

The previous result accounts for associativity of $\mu$ and of $\delta$, but we should do the same with the two Frobenius equations. 
We can use the fact that each of the Frobenius rewriting
rules $\FS{3}$ and $\FS{4}$ strictly decreases $|\mathcal{D}(G)|$ where
\[
  \mathcal D(G) := \{ (h \in G_{2,1}, h' \in G_{1,2}) \ |\ \textrm{there is no path from $h$ to $h'$} \}
\]
Following Definition~\ref{def:path}, we will use the term \textit{path} in this and the next section to refer exclusively to directed paths, i.e. sequences of hyperedges $[h_1, \ldots, h_n]$ such that $h_{i+1}$ is a successor of $h_i$. Note that, since $h \in G_{2,1}$, it must be a $\mu$-hyperedge, and since $h' \in G_{1,2}$, it must be $\delta$-hyperedge. Also note the negation in the definition of $\mathcal D$: as more paths are introduced, the set $\mathcal D(G)$ gets smaller.

\begin{lemma}\label{lem:fs-D}
  The following is a weak reduction ordering for $\{\FS 1, \FS 2\}$ and a reduction ordering for $\{\FS 3, \FS 4 \}$
  \[
    H \preceq_{\mathcal D} G \iff |\mathcal D(H)| \leq |\mathcal D(G)|
  \]
\end{lemma}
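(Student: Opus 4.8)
The plan is to control the non\-negative integer $|\mathcal D(G)|$ across a single rewrite step. Well\-foundedness of $\preceq_{\mathcal D}$ is immediate, since $\mathcal D(G)\subseteq G_{2,1}\times G_{1,2}$ is finite, so $\mathcal D$ takes values in $\N$ and $\prec_{\mathcal D}$ is the preimage under $\mathcal D$ of the well\-founded strict order on $\N$. Everything else reduces to two claims: applying $\FS 1$ or $\FS 2$ leaves $\mathcal D$ unchanged (hence $\preceq_{\mathcal D}$ is a weak reduction ordering for them), and applying $\FS 3$ or $\FS 4$ removes at least one pair from $\mathcal D$ (hence $\preceq_{\mathcal D}$ is a reduction ordering for them). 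Throughout, the ambient hypergraph is monogamous acyclic by Theorem~\ref{thm:charactImage}, and this will be used repeatedly.

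For $\FS 1$ and $\FS 2$ I would first isolate a reachability fact about monogamous acyclic hypergraphs: all $\mu$\-hyperedges lying in a single $\mu$\-tree reach exactly the same set of $\delta$\-hyperedges by a directed path. Indeed, by monogamy every internal node of a $\mu$\-tree has out\-degree $1$, so from the target of any $\mu$\-hyperedge in the tree the unique forward edge stays inside the tree until it leaves at the tree's unique root; since the tree contains no $\delta$\-hyperedge, the $\delta$'s reachable from that $\mu$ are precisely those reachable from the root. Dually for $\delta$\-trees. Now $\FS 1$ (associativity) reshapes a single $\mu$\-tree while preserving its set of $\mu$\-hyperedges, its leaves and its root, and touches neither $\delta$\-hyperedges nor the ambient context; in particular it neither merges nor splits trees. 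Symmetrically for $\FS 2$ (coassociativity) and $\delta$\-trees. By the reachability fact, the relation ``a directed path runs from this $\mu$\-hyperedge to this $\delta$\-hyperedge'' is unchanged, so $\mathcal D(H)=\mathcal D(G)$, whence $H\preceq_{\mathcal D}G$.

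For $\FS 3$ and $\FS 4$ --- the two Frobenius rules, each rewriting a pattern with the $\mu$\- and $\delta$\-hyperedge ``in parallel'' to the pattern $\mu\poi\delta$ in which they are ``in series'' --- write $\mu^{\ast},\delta^{\ast}$ for the matched redex hyperedges. On each left\-hand side $\mu^{\ast}$ feeds only a boundary output and $\delta^{\ast}$ is fed only by a boundary input, and $\delta^{\ast}$ in fact reaches $\mu^{\ast}$ inside the redex; hence any directed path from $\mu^{\ast}$ to $\delta^{\ast}$ through the context would close a cycle, which acyclicity forbids, so $(\mu^{\ast},\delta^{\ast})\in\mathcal D$ beforehand. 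On the right\-hand side $\mu^{\ast}\poi\delta^{\ast}$ identifies the target of $\mu^{\ast}$ with the source of $\delta^{\ast}$, so this pair leaves $\mathcal D$. It remains to show that no pair \emph{enters} $\mathcal D$, i.e.\ that the step destroys no pre\-existing directed $\mu$\-to\-$\delta$ path. For this I would prove the monotonicity statement: for each of $\FS 3$ and $\FS 4$, the right\-hand side has at least as much directed reachability between the boundary nodes of the redex as the left\-hand side (a finite comparison of boundary input\-to\-output reachabilities). Combined with acyclicity --- which prevents a directed path from re\-entering the redex it has just left --- this shows every old directed path either avoids the redex and is unaffected, or routes through it via some boundary input\-to\-output connection still present after the step; paths into $\delta^{\ast}$ and out of $\mu^{\ast}$ survive likewise. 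Hence $|\mathcal D(H)|<|\mathcal D(G)|$.

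The step I expect to be the main obstacle is exactly this last piece of bookkeeping in the $\FS 3$/$\FS 4$ case: to be certain that deleting the redex and splicing in $\mu^{\ast}\poi\delta^{\ast}$ cannot break a directed $\mu$\-to\-$\delta$ path living elsewhere in the graph, one must fix the precise orientation and boundary decomposition of the two Frobenius rules, exploit monogamy and acyclicity of the ambient cospan, and check that the boundary\-reachability comparison is genuinely monotone and not merely different. The remaining ingredients --- well\-foundedness, the reachability fact about $\mu$\-/$\delta$\-trees, and the observation that a single $\FS 1$ or $\FS 2$ step neither merges nor splits such trees --- are routine once that fact is in place.
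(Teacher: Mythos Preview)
Your proposal is correct and follows essentially the same line as the paper: well-foundedness from $\mathbb N$-valuedness, invariance of $|\mathcal D|$ under $\FS 1,\FS 2$ by reachability preservation, and strict decrease under $\FS 3,\FS 4$ via acyclicity (forcing $(\mu^\ast,\delta^\ast)\in\mathcal D(G)$) together with monotone boundary connectivity of the right-hand side.

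Two small remarks on presentation. For $\FS 1,\FS 2$ the paper avoids the $\mu$-tree machinery entirely: it just observes that on both sides of the rule there is a path from every input to the output and from both $\mu$-hyperedges to the output, which immediately gives $\mathcal D(H)=\mathcal D(G)$; your tree argument is correct but heavier, and you should make explicit that paths from an \emph{outside} $\mu$ passing through the redex are preserved (this follows because both sides are trees with the same leaves and root, but you do not say it). For $\FS 3,\FS 4$, what you flag as the ``main obstacle'' is actually the easy part: the right-hand side $\mu\poi\delta$ has a path from every input to every output, from every input to $\delta^\ast$, and from $\mu^\ast$ to every output, so no pre-existing $\mu$-to-$\delta$ path can possibly be broken and no re-entry analysis via acyclicity is needed for monotonicity --- acyclicity is only used to establish $(\mu^\ast,\delta^\ast)\in\mathcal D(G)$.
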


\begin{proof}
  $\mathcal D$ sends a hypergraph to a finite set of ordered pairs, so $\preceq_{\mathcal D}$ is well-founded.

  Note that all four of the rules $\FS i$ preserve the number of $\mu$- and $\delta$-hyperedges in $G$. Hence, if $G \rigidDPOstep{\FS{i}} H$, we can take $H$ to have the same set of hyperedges as $G$, but with different connectivity. For the remainder of the proof, we examine how each rule application affects the paths in a hypergraph. For this, we rely on the fact that all of the hypergraphs involved in the rewriting are monogamous and acyclic. As a consequence of monogamy, any path entering the left-hand side of a rule must do so via an input, and any path exiting must do so via an output.

  For $G \rigidDPOstep{\FS{1}} H$, a $\mu$-hyperedge $h$ has a path to a $\delta$-hyperedge in $G$ if and only if it does in $H$. This follows from the fact that there is a path from every input to the output and from both $\mu$-hyperedges to the output on both sides of the rewriting rule
  \ctikzfig{termination-path-pres}
  Hence $G \sim_{\mathcal D} H$. A symmetric argument holds for $\FS{2}$, so we conclude that $\preceq_{\mathcal D}$ gives a weak reduction ordering for $\{\FS 1, \FS 2\}$.

  For $G \rigidDPOstep{\FS 3} H$, let $L$ be the image of the left-hand side of $\FS{3}$ in $G$ and $R$ the image of the right-hand side of $\FS{3}$ in $H$.
  We will refer to the unique $\mu$-hyperedge in $L$ and $R$ as $h$, and the unique $\delta$-hyperedge in $L$ and $R$ as $h'$.
  First, note that there is a path from every input in $R$ to every output. There is also a path from every input of $R$ to $h'$ and from $h$ to every output
  \ctikzfig{termination-path-reduce}
  Hence, applying $\FS{3}$ can only create more paths from $\mu$-hyperedges to $\delta$-hyperedges and never breaks them, so $\mathcal D(H) \subseteq \mathcal D(G)$. 
  Furthermore, by acyclicity there must not be a path from $h$ to $h'$ in $G$, but there is one in $H$. So the containment $\mathcal D(H) \subseteq \mathcal D(G)$ is strict 
  and thus
  $H \prec_{\mathcal D} G$. The argument for $\FS{4}$ is identical.
\end{proof}

\begin{therm}
  $\FS{}$ is terminating.
\end{therm}

\begin{proof}
  We form the lexicographic ordering $\precFS := \preceq_{\mathcal D, \mathcal L}$. It then follows from Lemmas~\ref{lem:combine-red-ord}, \ref{lem:fs-L}, and \ref{lem:fs-D} that $\precFS$ gives a reduction ordering for $\FS{}$.
\end{proof}

It is worth noting that acylicity plays a crucial role in the above proof. If the two hyperedges in the left-hand side of $\FS{3}$ or $\FS{4}$ were part of a directed cycle, one could potentially find an infinite sequence of rule applications.

Next example shows that, while \textit{convex} rewriting prevents us from introducing cycles (and hence non-terminating behaviour), it also breaks confluence from this system. Our counter-example is based on Example 3.11 from~\cite{power1991npasting}, which was given in terms of string diagrams by~\cite{AmarTweet}. While Power's original example concerned morphisms in a 3-category, the same phenomenon appears in symmetric monoidal categories, and can be understood as a surprising consequence of the convexity condition: namely, even non-overlapping rule applications can block one-another.

\begin{example}
\label{convex-blocking}
Consider the following diagram, and its rendering as a cospan of hypergraphs
\[ \tikzfig{counter-ex} \qquad \mapsto \qquad (2 \rightarrow G \leftarrow 2) \ :=\ \tikzfig{counter-ex-hyp} \]
For a bialgebra (cf. the next section), this is a familiar diagram, as it is the right-hand side of one of the equations. In that context, it is also a normal form. That is, none of the bialgebra rules can be applied, so this diagram is considered fully simplified. However, for Frobenius semi-algebras there are two different rules that apply: $\FS 3$ and $\FS 4$. Let us have a look at the hypergraph we obtain when we apply each of these two rules
\begin{align}
  G & \ \rigidDPOstep{\FS{3}} \ \tikzfig{counter-ex-fs3}
  \label{eq:counter-ex-fs3}\\
  G & \ \rigidDPOstep{\FS{4}} \ \tikzfig{counter-ex-fs4}
  \label{eq:counter-ex-fs4}
\end{align}
Note that these two rule applications act on disjoint sets of hyperedges, and yet they still interfere with each other. In particular, applying $\FS{3}$ introduces a new path 
from the leftmost $\delta$-hyperedge in hypergraph~\eqref{eq:counter-ex-fs3} to the rightmost $\mu$-hyperedge. Whereas these two hyperedges previously defined a 
convex sub-hypergraph of $G$, they are no longer convex after $\FS{3}$ has been applied. Consequently, these no longer define a valid match for $\FS{4}$. 
Similarly, applying $\FS{4}$ to $G$ in \eqref{eq:counter-ex-fs4} blocks the application of $\FS{3}$. In fact, neither of the hypergraphs \eqref{eq:counter-ex-fs3} or 
\eqref{eq:counter-ex-fs4} contain a match for any of the rules of $\FS{}$. Hence, we have distinct hypergraphs $H_1$ and $H_2$ arising from $G$ that cannot be 
rewritten into the same hypergraph by the rules of $\FS{}$, so $\FS{}$ is not confluent.
\end{example}

\subsection{Bialgebras}
\label{case:bialg}

We now consider \emph{bialgebras}, i.e., a theory with the same generators as a Frobenius algebra
\[
  \left\{\ \mu := \Wmult, \eta := \Wunit, \delta := \Wcomult,\epsilon := \Wcounit\right\}
\]
but with a different set of equations. It is the theory underlying the (bi)-category of spans of sets with disjoint union as monoidal product~\cite{BruniG01},
and it has been used in the axiomatisation of flownomials, an algebraic presentation of flowcharts~\cite{stefanescu2000network}.
The equations of non-commutative bialgebras are given in Figure~\ref{fig:nb-rules}, and their associated DPO rewriting rules, forming the rewriting system $\BA{}$, are shown in Figure~\ref{fig:nb-rules-dpo}.

\begin{figure*}
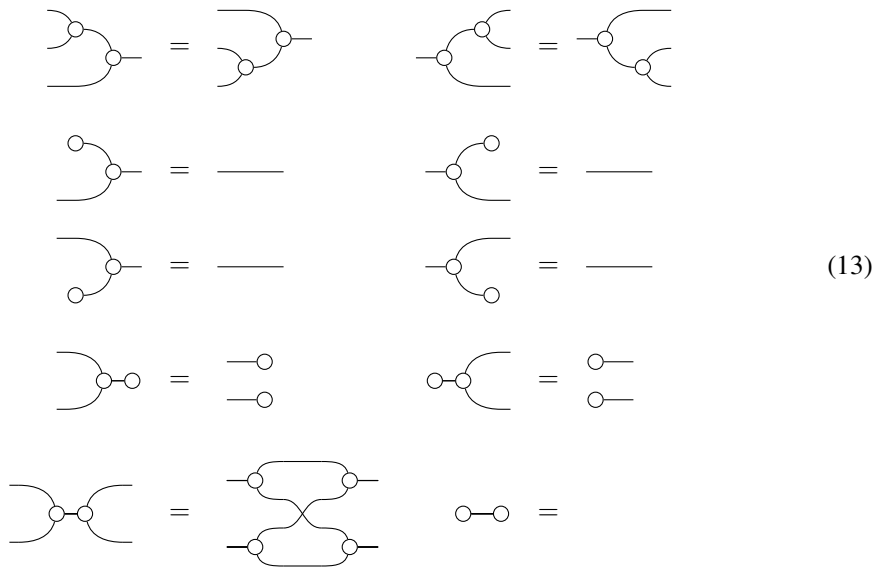

\begin{equation} \label{eq:ncbialgebralaws}
  \scalebox{1.0}{\tikzfig{nb-rules}}
\end{equation}

\medskip

\caption{The equations of a bialgebra.}
\label{fig:nb-rules}
\end{figure*}

\begin{figure*}
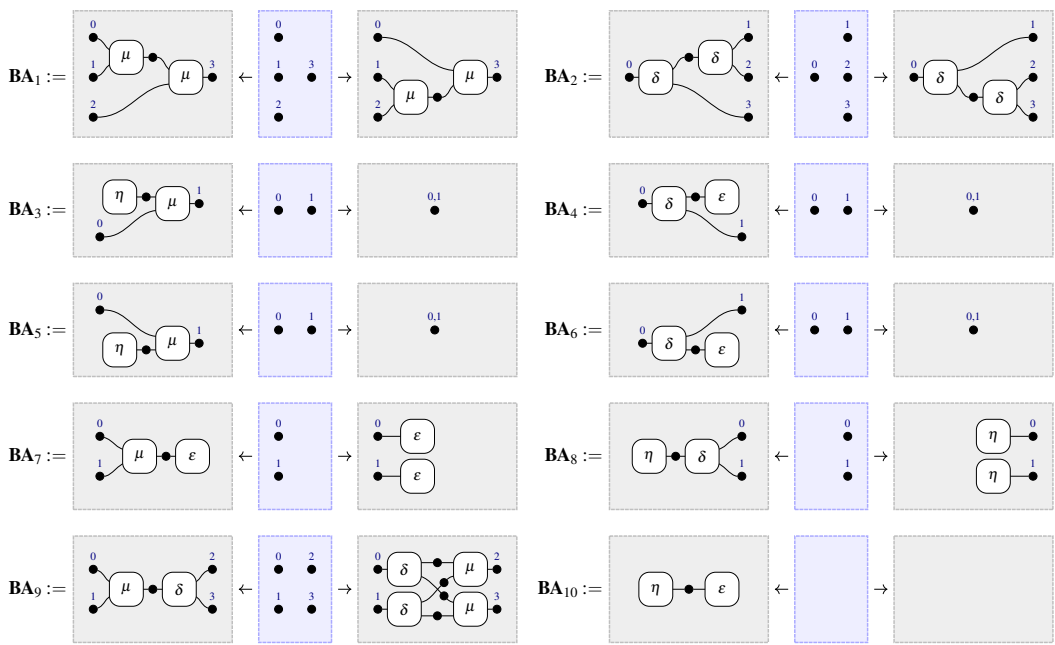

  \[
    \scalebox{0.7}{\tikzfig{nb-rules-dpo}}
  \]

  \medskip

  \caption{DPO rewriting system $\BA{}$ for bialgebras.} \label{fig:nb-rules-dpo}
\end{figure*}

We now focus on proving termination for this system. Its proof is slightly more elaborate, as there are more rules, and the rules do not always preserve the number of hyperedges in a hypergraph. Notably, repeated applications of $\BA{9}$ can significantly increase the number of hyperedges.

Nevertheless, we can find useful reduction orderings by counting paths rather than counting hyperedges. For this, we define two kinds of paths, one that tracks paths involving (co)\underline{U}nits, and the other for (co)\underline{M}ultiplications
\begin{itemize}
  \item a \textit{U-path} is a path $p$ from an input or an $\eta$-hyperedge to an output or an $\epsilon$-hyperedge;
  \item an \textit{M-path} is a path from a $\mu$-hyperedge to a $\delta$-hyperedge.
\end{itemize}

Next, define orders \precU, \precM, \precmu, \precdelta based on counting the number of U-paths, M-paths, $\mu$-hyperedges, and $\delta$-hyperedges, respectively. Using these four orderings, along with $\precL$ defined in the previous section, we define the following lexicographic ordering
\begin{equation}\label{eq:lex-metric}
  \precBA \ :=\  \preceq_{U, M, \mu, \delta, \mathcal L}
\end{equation}
Each of the components of $\precBA$ is well-founded, so $\precBA$ itself is well-founded. Thus we can conclude as follows.

\begin{therm}
  $\precBA$ is a reduction ordering for $\BA{}$, thus $\BA{}$ terminates.
\end{therm}

\begin{proof}
  We argue rule-by-rule, showing that each one is strictly decreasing in one of the orders from \eqref{eq:lex-metric}, and non-increasing in every order that is prior in the lexicographic ordering.

  Since every rule \BA{j} has a unique path from every input to every output for both left- and right-hand side,
  applications of these rules have no effect on paths which start and finish outside of their image. Hence, for each rule,
  we only need to consider paths which start or terminate in the image of the left-hand side.

  \BA{1} has no effect on $\eta$ or $\epsilon$ hyperedges, hence on \precU. No M-path can terminate in \BA{1} and any M-path originating on \BA{1} must exit through the unique output. Since there are precisely two $\mu$-hyperedges in both the left- and the right-hand side, there is a one-to-one correspondence between M-paths before and after applying the rule. \BA{1} leaves the number of $\mu$ and $\delta$ hyperedges fixed, so it suffices to show it strictly decreases \precL.  Applying the rule has no effect on the $\mathcal{L}$-weight of any $\mu$-hyperedges outside of the image of the left-hand side. Suppose there are $\mu$-trees of size $a, b, c$ connected to inputs $0, 1, 2$ of the left-hand side, respectively. The $\mathcal{L}$-weight of the two $\mu$-hyperedges on the left-hand side are thus $a$ and $a + b + 1$, whereas on the right-hand side they are $a$ and $b$. Hence \precL is strictly decreased.
  \BA{2} follows via a symmetric argument.

  Since \BA{3}--\BA{6} and \BA{10} remove $\eta$- and $\epsilon$-hyperedges from the hypergraph, they will strictly decrease the number of U-paths.

  For \BA{7}, no U-path can terminate in the left-hand side, and any U-path starting in the left-hand side must exit through one of the two outputs. Hence it corresponds to a unique U-path exiting the right-hand side. M-paths are unaffected, as is the number of $\delta$-hyperedges. However, the number of $\mu$-hyperedges is strictly decreased, so \BA{7} strictly decreases \precmu. The argument for \BA{8} is symmetric, yet with respect to \precdelta.

  \BA{9} has no $\eta$ or $\epsilon$-hyperedges in either the left- or the right-hand side, so it leaves the number of U-paths fixed.  Consider an M-path that enters the left-hand side from the left. It enters either from input $0$ or input $1$, hence it corresponds to a unique M-path entering the right-hand side. We can argue similarly for M-paths exiting on the right. Hence, the only M-path left to consider is the one from the $\mu$-hyperedge to the $\delta$-hyperedge in the left-hand side, which is eliminated. Thus \BA{9} strictly reduces \precM.
\end{proof}

It is also possible to show that, unlike $\FS{}$, the rewriting system $\BA{}$ is confluent. An important factor in the confluence proof is the fact that the rewriting 
system $\BA{}$ is left-connected (cf. Definition~\ref{def:left-connected}), so we do not need to impose convexity as an additional requirement when we do rewriting, 
thanks to Theorem~\ref{thm:stronglyconnected}. This rules out situations like the one for $\FS{}$ in Example~\ref{convex-blocking}, where disjoint rule applications 
can block one-another due to convexity considerations.


In order to show confluence, we can use a technique known as \textit{critical pair analysis}.
There are various subtleties arising in critical pair analysis for DPO rewriting~\cite{Plump1993} and general rewriting for symmetric monoidal categories~\cite{Lafont2003}, which are beyond the scope of the this paper. Hence, we leave a formal proof of the confluence of $\BA{}$ for the sequel to this paper, in which we develop a comprehensive  framework for critical pair analysis on convex DPOI rewriting~\cite{BGKSZ-partthree}.

\section{Conclusions and Further Works}
\label{sec:conclusions}

In this paper we developed a practical approach to the rewriting of symmetric monoidal categories. Relying on a previously identified~\cite{BGKSZ-partone} correspondence between string diagrams and cospans of hypergraphs, we classify those cospans that do not rely on the presence of an additional Frobenius structure, i.e.\ those that are relevant when considering only symmetric monoidal categories.
Having thereby identified a combinatorial structure that serves as a sound encoding of string diagrams, we use the mechanism of double pushout rewriting, which we modify in order to ensure soundness and completeness. This involves the identification of sufficiently well-behaved pushout complements, and restricting to similarly well-behaved matches: roughly speaking, these restrictions ensure that the rewrites themselves can only rely on the ``vanilla'' symmetric monoidal structure, without any use of the laws of Frobenius algebras. We arrive at a practical procedure for rewriting modulo symmetric monoidal laws: assuming an implementation of DPO rewriting of hypergraphs, each restriction can be easily checked algorithmically.

While originating in category theory~\cite{Joyal1991}, string diagrams have been influential in computer science, especially after the paper on traced monoidal categories by Joyal, Street and Verity~\cite{Joyal_tracedcategories}. However, the correspondence between terms of ``2-dimensional'' algebraic structures---i.e.\ those with sequential and parallel composition, understood as arrows of a free symmetric monoidal category---and suitable hypergraphs (flow diagrams) was recognized earlier, and studied at least since the work of Stefanescu (see the references in the survey~\cite{Selinger2009}).

Closely related to our work, Dixon and Kissinger~\cite{DixonK13} use cospans of string graphs (called there open graphs) to encode morphisms in a symmetric monoidal category
and reason equationally via DPO rewriting. There is an evident encoding of the hypergraphs we use into string graphs.
However, the notion of rewriting considered there is only sound if there is a trace on the symmetric monoidal category, whereas our notion of convex DPO rewriting
guarantees soundness for any symmetric monoidal category. Another difference is that we directly work in an adhesive category, while the category of open graphs
inherits the relevant properties from an embedding into the adhesive category of typed graphs.

In logic and computer science, diagrammatic rewriting was
motivated in part by computational patterns appaearing in the proof theory of linear logic, leading to general diagrammatic rewriting frameworks such as interaction nets~\cite{lafont1989interaction, mazza2006interaction}.
A general rewriting theory of such structures has also been developed; notably
Burroni~\cite{Burroni1993} generalised term rewriting to higher dimensions, including the 3-dimensional case of string diagram rewriting; see Mimram's survey~\cite{Mimram14}.
Here, in order to capture symmetric monoidal structure, the laws of symmetric monoidal categories would usually be considered as explicit rewriting rules, resulting in sophisticated rewriting systems whose analysis is often challenging (see e.g.~\cite{guiraud2006termination,Lafont2003}). Abstract higher-dimensional rewriting is far more general than our approach, which has been tailored over symmetric monoidal categories. The loss of generality brings the benefits of specialisation: our approach has the laws of symmetric monoidal categories built-in, reducing the complexity of the resulting rewriting systems. Thus, our work can be seen as part of a more general effort the search for characterisations that capture some parts of relevant algebraic structure in combinatorial models that bring the possibility of implementation; see e.g.\ Obradovic's work on capturing the algebra of cyclic operads~\cite{Obradovic:phdthesis,DBLP:journals/acs/CurienO20} and Hadzihasanovic's recent work on diagrammatic sets~\cite{DBLP:journals/acs/Hadzihasanovic20,DBLP:conf/lics/Hadzihasanovic21}.

This is the second of a three paper series, the first being~\cite{BGKSZ-partone}, and the third~\cite{BGKSZ-partthree} devoted to solving the problem of confluence for string diagram rewriting. With these papers we hope to lay the foundations for the next generation of diagrammatic proof assistants for symmetric monoidal theories. Such tools would lie between Globular~\cite{DBLP:journals/lmcs/BarKV18} and \texttt{homotopy.io} (\url{https://homotopy.io/}) on the one hand, the foundations of which are designed for reasoning about higher-dimensional weak structures and thus have minimal algebraic structure built-in, and Quantomatic~\cite{DBLP:journals/corr/KissingerZ15a}, PyZX and QuiZX on the other, in which the implementations of rewriting rely on the rich algebraic structure of the ZX-calculus.

%

\bibliographystyle{plain}
\bibliography{catBib3Rev}
\end{document}